\documentclass[conference]{IEEEtran}
\usepackage{cite}
\usepackage{amsmath,amssymb,amsthm,mathtools}
\usepackage{booktabs,array,multirow}
\usepackage{enumitem}
\usepackage{url}
\usepackage{xcolor}
\usepackage{microtype}
\usepackage{balance}
\usepackage{graphicx}
\usepackage{tikz}
\usepackage{placeins}
\usepackage{stfloats}
\usepackage{hyperref}
\hypersetup{hidelinks}

\newtheorem{theorem}{Theorem}
\newtheorem{lemma}[theorem]{Lemma}
\newtheorem{proposition}[theorem]{Proposition}

\theoremstyle{definition}
\newtheorem*{schemedefinition}{Definition (Graph-CMSS scheme)}
\theoremstyle{plain}

\newcommand{\F}{\mathbb F}
\newcommand{\Rel}{\operatorname{Rel}}

\newcommand{\wt}{\operatorname{wt}}
\newcommand{\Unif}{\operatorname{Unif}}
\newcommand{\bits}{\{0,1\}}
\newcommand{\xor}{\oplus}

\newcommand{\cO}{\mathcal O}
\newcommand{\cG}{\Gamma_4}
\newcommand{\cU}{\mathsf U}
\newcommand{\cQ}{\mathsf Q}
\newcommand{\Z}{\mathbb Z}

\newcommand{\turn}{\operatorname{turn}}
\newcommand{\View}{\mathsf{View}}

\title{The Price of Privacy: Randomness Complexity of Graph-Based Multi-Secret Sharing}

\author{Piotr Marszalik\\[2mm]\small Institute of Science and Technology Austria\\[2mm]\small University of Warsaw\\[2mm]\small \texttt{pm448533@students.mimuw.edu.pl}\\[2mm]\small October 2026}

\begin{document}
% Single-file K4 classification with introductory intuition and a main-text proof roadmap.
% Appendices retain only proof material and computations for the K4 classification
% and its exact one-bit prerequisite. Equation numbering is automatic.
\maketitle

\begin{abstract}
We study the randomness required to share possibly correlated secret bits
among parties connected by a graph. A dealer places shares on the edges
so that each party can recover its own secret from its incident shares
and learn nothing about the others beyond what its own secret reveals.
Anilkumar et al.\ completely determined the minimum randomness required
for three binary secrets. We extend this study to four secrets and obtain
results for arbitrary numbers of secrets on general graphs. For four
parties on a complete graph, we determine the minimum number of random
states for every set of permitted secret combinations: the possible
values are one, two, three and four. We also characterize when one random
bit suffices on an arbitrary graph.
\end{abstract}

\section{Introduction}

A dealer places shares on the edges of a graph whose vertices carry potentially correlated
secret bits.  Each vertex must recover its own bit from its incident shares,
and its complete view must reveal nothing beyond that bit.  We minimize the
number of source states used by the dealer, allowing arbitrary positive
source weights and finite share alphabets.

For $K_4$, let $D\subseteq\bits^4$ be the publicly known set of allowed
secret vectors, viewed as a table of four-bit rows.  For each row,
each source state specifies a labeling of all six edges, using the same
source-state probabilities for every row.  A vertex observes only its three
incident labels (Fig.~\ref{fig:k4-views}).  Its possible views for local bits
zero and one must be disjoint, so it can decode its bit.  Rows agreeing on
that bit must nevertheless produce identical weighted view distributions,
even when the other bits differ.  This is the privacy requirement: knowing
its own bit, a vertex cannot distinguish which compatible row of $D$ was
chosen.  We must meet these requirements at all four
vertices simultaneously: the label on each edge is shared by its two endpoints.

\begin{figure}[ht]
\centering
\begin{tikzpicture}[x=1cm,y=1cm,font=\footnotesize]
\node[draw,circle,fill=white,inner sep=2pt] (v1) at (0,1.6) {$x_1$};
\node[draw,circle,fill=white,inner sep=2pt] (v2) at (2.7,1.6) {$x_2$};
\node[draw,circle,fill=white,inner sep=2pt] (v3) at (0,0) {$x_3$};
\node[draw,circle,fill=white,inner sep=2pt] (v4) at (2.7,0) {$x_4$};
\draw[line width=.4pt] (v2) -- node[right] {$W_{24}$} (v4);
\draw[line width=.4pt] (v3) -- node[below] {$W_{34}$} (v4);
\draw[line width=.4pt] (v2) -- node[pos=.24,below,sloped,fill=white,inner sep=1pt] {$W_{23}$} (v3);
\draw[line width=1pt] (v1) -- node[above] {$W_{12}$} (v2);
\draw[line width=1pt] (v1) -- node[left] {$W_{13}$} (v3);
\draw[line width=1pt] (v1) -- node[pos=.24,above,sloped,fill=white,inner sep=1pt] {$W_{14}$} (v4);
\end{tikzpicture}
\caption{Six edge shares on $K_4$.  The three thick edges form vertex $1$'s
view $(W_{12},W_{13},W_{14})$; every vertex observes its own incident triple.}
\label{fig:k4-views}
\end{figure}

Earlier work studied randomness in multi-secret sharing through entropy
bounds~\cite{BlundoMasucci1999,Masucci2006} and connected correlated
multi-secret sharing to secure computation~\cite{Data2016}.
Anilkumar et al.~\cite{Anilkumar2023} classified every nonempty binary
three-secret sharing (3SS) domain on $K_3$.  We give the complete $K_4$
classification, with logarithmic support costs $0,1,\log_2 3,2$.
The main structural result is an exact one-bit characterization for arbitrary
graphs, derived directly from privacy and correctness without a linearity
assumption.  To prove it, we formulate a signed-subgraph criterion in terms
of cycle turn vectors.  A turn vector records the secret coordinates that
remain when edge-compatibility equations are XORed around a cycle; the
criterion requires the resulting parities to be constant on the domain.
Its $K_4$ specialization provides the first boundary of the classification.

The main text develops the one-bit mechanism, gives explicit three- and
four-state constructions, and follows the route from minimal obstructions
to the final classification.  Appendices supply the remaining structural
proofs, three-state reductions, and orbit calculations.  Exhaustive computation partitions the $896$ minimal
obstructions into six symmetry classes and excludes three states in four of
them; these searches are proof components of the classification.  All other
reductions and constructions have direct proofs.  The exact search and
an independent audit of the $K_4$ calculations are described in
Appendix~\ref{app:computation}.

\section{Model and Affine Relations}\label{sec:model}

The model below is the case of the CMSS framework of
Data et al.~\cite{Data2016} that protects each vertex
individually, written pointwise on the allowed domain $D$.
On $K_3$ it is the three-secret sharing setting of
Anilkumar et al.~\cite{Anilkumar2023}.

\begin{schemedefinition}
Let $G=(V,E)$ be a finite simple graph and let $D\subseteq\F_2^V$ be nonempty.
A scheme consists of a finite probability space $(R,\mu)$ with full support
($\mu(r)>0$ for every $r\in R$), whose distribution is independent of the
input $x\in D$, and share maps in arbitrary finite alphabets,
\[
 W_e:D\times R\to\mathcal W_e,\qquad e\in E.
\]
Vertex $v$ sees $\View_v(x,r)=(W_e(x,r):e\ni v)$.
Correctness requires
\begin{equation}
 \View_v(x,r)=\View_v(y,s)\ \Longrightarrow\ x_v=y_v
 \label{eq:correctness}
\end{equation}
for all $x,y\in D$ and $r,s\in R$.
For a source draw $\mathsf R\sim\mu$, perfect privacy requires
\begin{equation}
 \View_v(x,\mathsf R)\stackrel{d}{=}\View_v(y,\mathsf R)
 \quad\text{whenever }x_v=y_v,
 \label{eq:privacy}
\end{equation}
where $\stackrel{d}{=}$ denotes equality in distribution.
\end{schemedefinition}

Privacy concerns each complete vertex view, without an input distribution or
a coalition-privacy requirement.  Information implied by $x_v$ and $D$ is
allowed.  Each edge share inherits privacy by projection of either endpoint's
view: its distribution depends on the input only through either endpoint's
local secret.  Define
\begin{equation}
 m_G(D):=\min |R|,\qquad \rho_G(D):=\log_2 m_G(D),
 \label{eq:complexity}
\end{equation}
over correct, private schemes, with value $+\infty$ if none exists.
Thus $m_G(D)=1$ means that a deterministic scheme exists, and $m_G(D)\le2$
means that at most one random bit suffices: Lemma~\ref{lem:uniform-two}
shows that any two-state source may be made uniform.
The cost is logarithmic support, not minimum Shannon entropy.

Correlations can make an unmasked share private.  On $K_4$, for example, if
$x_1\xor x_2=c$ throughout $D$, with a fixed public bit $c$, assigning
$W_{12}=x_1$ lets vertices $1$ and $2$ recover $x_1$ and
$x_2=W_{12}\xor c$, respectively.  At either endpoint the share is already
determined by the local bit, so it reveals nothing beyond that bit.  This
handles these two vertices; a full scheme must also serve the others.
The affine relations below record such parity constraints on the domain.

For $x^0\in D$, let
\begin{equation}
 L(D):=\operatorname{span}\{x\xor x^0:x\in D\},\quad
 \Rel(D):=L(D)^\perp.
 \label{eq:relation-space}
\end{equation}
This is the direction space of the affine hull: changing the base point adds
another vector of $L(D)$ and leaves the span unchanged.  With $a\cdot x$
denoting the standard inner product over $\F_2$,
\begin{equation}
 a\in\Rel(D)\ \Longleftrightarrow\ a\cdot x\text{ is constant on }D.
 \label{eq:rel-affine}
\end{equation}
A vertex is \emph{active} if both local bits occur.  An edge $uv$ is a
\emph{pair-relation edge} if
\begin{equation}
 e_u+e_v\in\Rel(D),
 \label{eq:pair-relation}
\end{equation}
where $e_u,e_v$ are the corresponding standard unit vectors; equivalently,
$x_u\xor x_v$ is constant on $D$.  A vertex is \emph{pair-covered} if it is
incident to such an edge.

\section{The Exact One-Bit Theorem}\label{sec:onebit}

\begin{proposition}[Zero randomness]\label{prop:zero}
A deterministic scheme exists if and only if every active vertex is incident to
a pair-relation edge.
\end{proposition}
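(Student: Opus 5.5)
We prove the two directions separately. Sufficiency is an explicit construction and necessity is a short argument from the two defining conditions. Take $|R|=1$, so each share $W_e$ is a deterministic function of $x\in D$.

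\emph{Sufficiency.} Suppose every active vertex is incident to a pair-relation edge. For every pair-relation edge $uv$ put $W_{uv}(x)=x_u$; put a constant symbol on every other edge. Fix a vertex $v$.
\begin{itemize}
\item If $v$ is inactive, $x_v$ is constant on $D$. Correctness holds trivially. Privacy also holds: the view, a deterministic function of $x$ restricted to inputs with that fixed $x_v$, still has to be constant. This is the delicate point, treated below.
\item If $v$ is active, pick an incident pair-relation edge $uv$. By \eqref{eq:pair-relation} we have $x_v=W_{uv}\xor c_{uv}$ for a public constant $c_{uv}$, so $v$ decodes its bit. Every incident share is either constant or equals $x_u$ for a pair-relation edge $uv$. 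In the latter case it equals $x_v\xor c_{uv}$, a function of $x_v$ alone. Hence $\View_v(x)$ is a function of $x_v$, which gives \eqref{eq:privacy}.
\end{itemize}

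The inactive vertices need one more observation. Every share we assign is constant on $D$ or equals $x_u$ for some pair-relation edge $uv$. In the second case $x_u\xor x_v$ is constant, and $u,v$ are either both active or both inactive. If both are inactive, $x_u$ is constant. So every share seen by an inactive vertex is constant on $D$, and its view is constant. This settles privacy at inactive vertices.

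\emph{Necessity.} Suppose a deterministic correct and private scheme exists, and let $v$ be active. With a one-point source, \eqref{eq:privacy} says $\View_v(x)=\View_v(y)$ whenever $x_v=y_v$. So $\View_v(x)=F_v(x_v)$ for some function $F_v$. By \eqref{eq:correctness}, $F_v(0)\neq F_v(1)$, since $v$ is active and both values occur. Hence some incident edge $e=uv$ has $W_e(x)=g(x_v)$ with $g$ injective on $\{0,1\}$.

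Apply privacy at the other endpoint $u$: $W_e$ is part of $\View_u$, so $W_e(x)=h(x_u)$ for some function $h$. Thus $g(x_v)=h(x_u)$ on $D$. If $u$ were inactive, $h(x_u)$ would be constant and so would $x_v$, contradicting activeness. So $u$ is active and $h$ is injective on the values that occur, giving $x_v=g^{-1}(h(x_u))$. The map $x_u\mapsto x_v$ is then a bijection of $\{0,1\}$, so $x_u\xor x_v$ is constant on $D$. Hence $uv$ is a pair-relation edge, as required.

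The main obstacle is only bookkeeping: checking that the shares in the sufficiency construction leak nothing at inactive endpoints, which the observation above handles, and making sure that privacy at $u$ is applied to the single shared edge rather than to the whole view.
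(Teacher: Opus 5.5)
Your proof is correct and follows the paper's argument essentially step for step. Necessity comes from privacy making each edge share a function of both endpoint bits, together with correctness forcing a nonconstant incident share; sufficiency places one endpoint's bit on each pair-relation edge and constants elsewhere. Your separate treatment of inactive vertices is valid but unnecessary, since $x_u=x_v\xor c_{uv}$ already makes every incident share a function of $x_v$ at every vertex, active or not, which is how the paper handles it.
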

The proof is given in Appendix~\ref{app:one-bit-proofs}.

For a two-state source, Lemma~\ref{lem:uniform-two} permits replacing its
positive probabilities by $1/2,1/2$ without changing any share map;
correctness and privacy are preserved.  We therefore take the source
uniform on $\F_2$ in the following two-state analysis.

For each vertex $v$ and each occurring local bit $b$, choose one allowed
input $x^{v,b}\in D$ with $x^{v,b}_v=b$; this is the \emph{reference input}
for that vertex and bit.  Its views at source states $0,1$ fix the reference
ordering $A_v^b(s):=\View_v(x^{v,b},s)$ for $s\in\F_2$.

\begin{lemma}[Local switching form]\label{lem:local-switching}
For every vertex $v$ there is a \emph{switch bit}
$\alpha_v:D\to\F_2$ such that
\begin{equation}
  \View_v(x,r)=A_v^{x_v}(r\xor\alpha_v(x)).
  \label{eq:local-switching}
\end{equation}
If both local bits occur, the reference-view sets
$\{A_v^0(0),A_v^0(1)\}$ and $\{A_v^1(0),A_v^1(1)\}$ are disjoint.
\end{lemma}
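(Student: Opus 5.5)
The argument uses only privacy \eqref{eq:privacy} for the uniform two-state source, together with correctness \eqref{eq:correctness} for the disjointness claim.

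\textbf{Step 1: equal view multisets.} Fix a vertex $v$ and an input $x\in D$, and set $b=x_v$. Since $x_v=x^{v,b}_v$, privacy says $\View_v(x,\mathsf R)$ and $\View_v(x^{v,b},\mathsf R)$ have the same distribution. The source is uniform on $\F_2$, so each distribution gives mass $1/2$ to each of the two source states. Equality of the distributions is therefore equality of multisets:
\[
\{\!\{\View_v(x,0),\View_v(x,1)\}\!\}=\{\!\{A_v^b(0),A_v^b(1)\}\!\}.
\]

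\textbf{Step 2: define the switch bit.} If $\View_v(x,0)=A_v^b(0)$, the multiset equality forces $\View_v(x,1)=A_v^b(1)$; set $\alpha_v(x)=0$. Otherwise $\View_v(x,0)=A_v^b(1)$, and removing that element from both multisets gives $\View_v(x,1)=A_v^b(0)$; set $\alpha_v(x)=1$.

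The only point needing care is the degenerate case $A_v^b(0)=A_v^b(1)$. There both values of $\alpha_v(x)$ work, and we take $\alpha_v(x)=0$. The identity \eqref{eq:local-switching} then holds for $r=0,1$ in every case, by checking the two cases directly. This step is essentially bookkeeping. The only real subtlety is that we need the source to be uniform, which the preceding reduction via Lemma~\ref{lem:uniform-two} supplies; with unequal weights the matching would instead follow by comparing masses.

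\textbf{Step 3: disjointness.} Suppose both local bits occur. By construction $A_v^0(s)=\View_v(x^{v,0},s)$ and $A_v^1(t)=\View_v(x^{v,1},t)$. If some $A_v^0(s)$ equaled some $A_v^1(t)$, correctness \eqref{eq:correctness} applied to the inputs $x^{v,0},x^{v,1}$ with source states $s,t$ would give $0=x^{v,0}_v=x^{v,1}_v=1$, a contradiction. So the two reference-view sets are disjoint.
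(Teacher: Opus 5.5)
Your proposal is correct and follows the paper's proof. Both use uniformity to turn privacy into equality of unordered view pairs with the reference input, define $\alpha_v(x)$ by whether the ordering agrees with or reverses the reference (taking $0$ when the reference views coincide), and obtain disjointness from correctness applied to the two reference inputs.
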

\begin{proof}
With the uniform source, privacy makes the two views at $x$ the same
unordered pair, with repetitions allowed, as at its reference input.
Set $\alpha_v(x)=0$ when their ordering agrees with the reference and
$\alpha_v(x)=1$ when it is reversed.  If the two reference views coincide,
choose $\alpha_v(x)=0$.  A view shared by different local bits would
contradict correctness.
\end{proof}

The construction has a simple local mechanism.  Two incident shares of the
form $(U_v,U_v\xor x_v)$ reveal $x_v$ by XOR, while their distribution
depends only on $x_v$ when $U_v$ is uniform.  In the construction below,
the variables $U_v$ need not coincide across vertices: for each fixed input
they are XOR-shifts of the same source bit $R$, not additional independent
randomness.
The difficulty is making these local descriptions agree on every shared edge.

Consider one edge $e=uv$ that is not a pair-relation edge.
Appendix~\ref{app:one-bit-proofs} shows that endpoint privacy makes its
share either constant on $D\times R$ or, at every input, a bijection from
the two source states onto a common two-symbol set.  Call an edge of the
second kind \emph{regular}.  After labeling its two symbols by $0,1$,
the edge coordinate of the reference view at endpoint $w\in\{u,v\}$ has
form $s\xor f_{w,e}(b)$.  Since $b$ is binary, write
$f_{w,e}(b)=d_{w,e}\xor q_w(e)b$ with fixed bits $d_{w,e},q_w(e)$.
The coefficient $q_w(e)$ is the \emph{incidence sign} at $w$.
Both endpoints see the same edge share, so their local orderings must satisfy
\begin{equation}
 \alpha_u(x)\xor\alpha_v(x)
 =c_e\xor q_u(e)x_u\xor q_v(e)x_v,
 \label{eq:edge-compatibility}
\end{equation}
where $c_e=d_{u,e}\xor d_{v,e}$.  Thus the switch bits at adjacent vertices
cannot be chosen independently.

For example, if three regular edges form the cycle $1-2-3-1$, XORing their
compatibility equations cancels the switch bits:
\[
 (\alpha_1\xor\alpha_2)\xor(\alpha_2\xor\alpha_3)
 \xor(\alpha_3\xor\alpha_1)=0,
\]
where all switch bits are evaluated at the same input $x$.
At vertex $1$, the remaining coefficient of $x_1$ is
$q_1(12)\xor q_1(31)$, and similarly at the other vertices.
A secret bit survives exactly when the two incident cycle signs differ;
the XOR of the surviving bits equals the fixed constant
$c_{12}\xor c_{23}\xor c_{31}$ throughout $D$.
This is how an affine relation arises from a cycle.

We encode the same cancellation for arbitrary cycles as follows.
Let $H=(V,F)\subseteq G$ and fix incidence signs $q_v(e)\in\F_2$ for
$v\in e\in F$.  Define the \emph{binary cycle space} by
\begin{equation}
 Z_1(H;\F_2)=\left\{z\in\F_2^F:
 \bigoplus_{\substack{e\in F\\e\ni v}}z_e=0
 \text{ for every }v\in V\right\}.
 \label{eq:cycle-space}
\end{equation}
Here $z_e=1$ selects edge $e$.  The indicator of every simple cycle belongs
to this space, and every element is an XOR of simple-cycle indicators.
On a simple cycle, the \emph{turn vector} records the vertices whose two
cycle incidences have different signs, namely the secret bits that survive
the cancellation above.  For any $z\in Z_1(H;\F_2)$, define
\begin{equation}
  (\turn_q(z))_v:=\bigoplus_{\substack{e\in F\\e\ni v}}q_v(e)z_e.
  \label{eq:turn-map}
\end{equation}
For this fixed labeling $q$, the signed subgraph $(H,q)$ is
\emph{$D$-flat} if
\begin{equation}
 \turn_q(z)\in\Rel(D)\qquad\text{for every }z\in Z_1(H;\F_2).
 \label{eq:flatness}
\end{equation}
Flatness means that every parity constraint obtained by going around a cycle
already holds throughout the allowed domain.  For a simple cycle $C$, the
parity $\turn_q(C)\cdot x$ must have the same value for every $x\in D$;
this value may be either zero or one.  Equivalently, for any fixed $x^0\in D$,
\[
 \turn_q(C)\cdot(x\xor x^0)=0\qquad\text{for every }x\in D.
\]
As the construction below shows, this is what makes the vertex shifts
defined by path sums independent of the chosen path.

A vertex is \emph{switched} if both incidence signs occur among its selected
incident edges.

\begin{theorem}[Exact one-bit characterization]\label{thm:onebit}
For every finite graph $G$ and every nonempty binary domain $D$,
$m_G(D)\le2$ if and only if there is a signed subgraph $(H,q)$ such that
\begin{enumerate}[label=(\roman*),leftmargin=*]
\item $H$ contains no pair-relation edge;
\item $(H,q)$ is $D$-flat;
\item every active vertex is either incident in $G$ to a pair-relation edge or
is switched in $(H,q)$.
\end{enumerate}
\end{theorem}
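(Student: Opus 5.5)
We prove the two directions separately. Throughout we work with a uniform source on $\F_2$, which Lemma~\ref{lem:uniform-two} permits.

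\emph{Necessity.} Fix a correct, private two-state scheme. Let $H$ be the set of regular edges that are not pair-relation edges, so (i) holds. Let $q$ consist of the incidence signs $q_w(e)$ extracted from the reference views, as described before \eqref{eq:edge-compatibility}; this uses the appendix fact that each non-pair-relation edge is either constant or regular.

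For flatness (ii), take any $z\in Z_1(H;\F_2)$ and XOR the compatibility equations \eqref{eq:edge-compatibility} over the selected edges $e$ with $z_e=1$. Each vertex has even selected degree, so every $\alpha_v(x)$ cancels. What remains is
\[
\turn_q(z)\cdot x=\bigoplus_e z_ec_e
\]
for every $x\in D$, so $\turn_q(z)\in\Rel(D)$ by \eqref{eq:rel-affine}.

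For (iii), let $v$ be active and not pair-covered. Then every incident edge of $v$ is constant or in $H$. Constant edges carry no information about $x_v$. Suppose all $H$-edges at $v$ had the same sign $\sigma$. Then, by Lemma~\ref{lem:local-switching}, each nonconstant coordinate of $A_v^b(s)$ equals $s\xor d_{v,e}\xor\sigma b$. So $A_v^1(s)=A_v^0(s\xor\sigma)$ for every $s$. The two reference-view sets therefore coincide, contradicting the disjointness in Lemma~\ref{lem:local-switching}. Hence $v$ is switched.

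\emph{Sufficiency.} Assume $(H,q)$ satisfies (i)--(iii). We build a scheme from one source bit $R$.

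\begin{itemize}
\item On each pair-relation edge $uv$, put the deterministic share $W_{uv}=x_u$. This is private at both endpoints, as noted after \eqref{eq:pair-relation}.
\item On each edge of $G$ outside $H$ that is not a pair-relation edge, put a constant.
\item On $H$, choose switch functions $\alpha_v:D\to\F_2$ solving \eqref{eq:edge-compatibility} with all $c_e=0$. In each connected component of $H$, fix a root $r_0$ with $\alpha_{r_0}=0$. For any other vertex $v$, define $\alpha_v(x)$ as the XOR along a path $P$ from $r_0$ to $v$ of
\[
q_u(e)x_u\xor q_w(e)x_w\xor \gamma_e ,
\]
where the constants $\gamma_e$ are still to be chosen.
\end{itemize}

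Two paths to $v$ differ by a cycle $z$, and the difference in the resulting $\alpha_v(x)$ is $\turn_q(z)\cdot x$ plus constants. Flatness makes $\turn_q(z)\cdot x$ constant on $D$. So we choose the $\gamma_e$ to cancel these constants. Concretely: fix a spanning forest, set $\gamma_e=0$ on its edges, and for each non-tree edge $e$ let $\gamma_e$ be the constant value of $\turn_q(C_e)\cdot x$ on $D$, where $C_e$ is the fundamental cycle of $e$. Then, by linearity over the cycle basis, $\alpha_v$ is well defined on $D$.

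Define the share on $e=uw\in H$ by
\[
W_e=R\xor\alpha_u(x)\xor q_u(e)x_u .
\]
By the compatibility equation, this also equals $R\xor\alpha_w(x)\xor q_w(e)x_w\xor\gamma_e$.

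\emph{Verification.} At $v$, the view coordinates on $H$-edges are $R'\xor q_v(e)x_v\oplus(\text{edge constant})$, with $R'=R\xor\alpha_v(x)$. Since $R'$ is uniform for each fixed $x$, the distribution of the view depends only on $x_v$, which gives privacy. For correctness:
\begin{itemize}
\item If $v$ is switched, there are incident $H$-edges $e,e'$ with $q_v(e)\ne q_v(e')$. XORing these two coordinates recovers $x_v$ up to a known constant.
\item Otherwise, if $v$ is active, condition (iii) makes it pair-covered, so it reads $x_v$ from the pair-relation edge, possibly after XOR with the public constant.
\item Inactive vertices know their bit already.
\end{itemize}
Privacy still holds for views combining $H$-edges with pair-relation edges, because the pair-relation shares are functions of $x_v$ alone.

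\emph{Main obstacle.} The delicate step is necessity. It relies on the appendix dichotomy that a non-pair-relation edge share is either constant or regular, and on making the incidence signs $q$ globally consistent when the reference inputs differ per vertex. In particular, we must check that the affine form $s\xor d\xor qb$ is forced even when an edge alphabet has more than two symbols. We would verify this first, using privacy at both endpoints.
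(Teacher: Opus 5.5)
Your sufficiency direction is correct and is the paper's construction up to normalization. The paper centres at a base point $x^0$, using $g_e(x)=q_u(e)(x_u\xor x_u^0)\xor q_v(e)(x_v\xor x_v^0)$, so that flatness makes every cycle sum vanish identically. You instead add edge constants $\gamma_e$ fixed on the non-tree edges of a spanning forest. The two are equivalent: the paper's choice amounts to $\gamma_e=q_u(e)x_u^0\xor q_v(e)x_v^0$ on every edge. The skeleton of your necessity argument is also the paper's: $H$ is the set of regular edges, flatness comes from XORing \eqref{eq:edge-compatibility} around cycles, and (iii) comes from coinciding reference views. If the minimum is a single state, duplicate it first.

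\textbf{The gap.} The step you defer as the ``main obstacle'' is the actual content of necessity, and it is the only place where the hypothesis that $e$ is not a pair-relation edge enters. Endpoint privacy only equates the distributions $\nu_x$ of $W_e(x,\mathsf R)$ for inputs that agree at $u$, or for inputs that agree at $v$. You need an argument that chains these equalities across all of $D$.

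The paper does this with the bipartite support graph $B_e$ on the occurring values of $x_u$ and of $x_v$, joining $(b,c)$ when some $x\in D$ has $(x_u,x_v)=(b,c)$. The distribution $\nu_x$ is constant on each component of $B_e$. Since no vertex of $B_e$ is isolated, $B_e$ is disconnected only when it is a perfect matching, that is, when $x_u\xor x_v$ is constant on $D$. If both endpoints are inactive, $e$ is again a pair relation. Hence for a non-pair edge all $\nu_x$ coincide. With a uniform two-point source this leaves two cases: either $W_e$ is one constant on all of $D\times R$, or every $W_e(x,\cdot)$ is a bijection onto one common two-symbol set.

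Fixing a single labeling $\lambda_e$ of that set and applying Lemma~\ref{lem:local-switching} at each endpoint gives
\[
\lambda_e\bigl(W_e(x,r)\bigr)=r\xor\alpha_w(x)\xor f_{w,e}(x_w),\qquad w\in\{u,v\}.
\]
Every function of one bit is affine, so \eqref{eq:edge-compatibility} follows. This also settles your other two worries: larger alphabets collapse to the common two-symbol set, and different reference inputs at different vertices cause no inconsistency once $\lambda_e$ is shared.

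Your proof of (iii) depends on the same lemma. ``Constant edges carry no information about $x_v$'' requires constancy on all of $D\times R$, not merely constancy in $r$ for each fixed $x$. An $r$-independent share that depends on $x_v$ is excluded only by the same connectivity argument, exactly as in the proof of Proposition~\ref{prop:zero}.
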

We call $(H,q)$ a \emph{certificate}: these three combinatorial properties
suffice to construct a one-bit scheme explicitly, and conversely every
one-bit scheme yields such a pair.

\begin{proof}[Sufficiency]
Fix $x^0\in D$.  For $e=uv\in F$, set
\begin{equation}
 g_e(x)=q_u(e)(x_u\xor x_u^0)\xor q_v(e)(x_v\xor x_v^0).
 \label{eq:g-edge}
\end{equation}
Flatness gives
\[
 \bigoplus_{e\in F}z_eg_e(x)=\turn_q(z)\cdot(x\xor x^0)=0
 \qquad(z\in Z_1(H;\F_2)).
\]
Choose a root in each connected component of $H$.  Define $h_v(x)$ as the
XOR of the $g_e(x)$ along a path from the root to $v$, with value zero at
the root.  The XOR of two such paths belongs to the cycle space, so the
vanishing cycle sums make $h_v(x)$ independent of the chosen path.
In particular, for $e=uv\in F$,
\begin{equation}
 h_u(x)\xor h_v(x)=g_e(x).
 \label{eq:potential}
\end{equation}
For uniform $R\in\F_2$, the share on $e=uv\in F$ can therefore be written
at either endpoint as
\begin{equation}
\begin{aligned}
 W_e(x,R)&=R\xor h_u(x)\xor q_u(e)(x_u\xor x_u^0)\\
         &=R\xor h_v(x)\xor q_v(e)(x_v\xor x_v^0).
\end{aligned}
 \label{eq:selected-share}
\end{equation}
Indeed, the XOR of the two displayed expressions is
$h_u(x)\xor h_v(x)\xor g_e(x)=0$ by \eqref{eq:g-edge}--\eqref{eq:potential}.
Pair-relation edges carry one endpoint bit and all remaining edges carry
constants.  At $v$, write $U_v=R\xor h_v(x)$; the selected shares are
$(U_v\xor q_v(e)(x_v\xor x_v^0):e\ni v)$.
Since $U_v$ is uniform, this tuple's distribution depends only on $x_v$;
pair-relation shares are also determined by $x_v$, proving privacy of the
complete view.  Opposite incidence signs reveal $x_v\xor x_v^0$, and a
pair-relation edge reveals $x_v$ deterministically.  Every active vertex
therefore reconstructs, while an inactive vertex already has a fixed bit.
\end{proof}

\begin{proof}[Necessity (sketch)]
A deterministic scheme yields the empty selected subgraph by
Proposition~\ref{prop:zero}.  Otherwise uniformize the two-state source and
apply Lemma~\ref{lem:local-switching}.  Keep the regular edges introduced
above; Appendix~\ref{app:one-bit-proofs} derives their compatibility equation
\eqref{eq:edge-compatibility} from endpoint privacy.
Correctness forces every active vertex not covered by a pair relation to see
both incidence signs, hence to be switched.  XORing
\eqref{eq:edge-compatibility} over any $z\in Z_1(H;\F_2)$ cancels every
switch bit and gives
$\turn_q(z)\cdot x=\bigoplus_{e\in F:z_e=1}c_e$ throughout $D$.
Thus the signed subgraph $(H,q)$ is $D$-flat.
The full necessity argument is in Appendix~\ref{app:one-bit-proofs}.
\end{proof}

\section{The Exact \texorpdfstring{$K_4$}{K4} Classification}\label{sec:k4-onebit}

Let $\wt(a)$ denote the Hamming weight of $a$, the number of its nonzero
coordinates, and set
$E_{\rm even}:=\{a\in\F_2^4:\wt(a)\text{ is even}\}\subseteq\F_2^4$.
On $K_4$, the signed certificate reduces to a single affine relation.

\begin{theorem}[$K_4$ one-bit criterion]\label{thm:k4-onebit}
For every nonempty $D\subseteq\F_2^4$,
\begin{equation}
 m_{K_4}(D)\le2
 \quad\Longleftrightarrow\quad
 \Rel(D)\cap E_{\rm even}\ne\{0\}.
 \label{eq:k4-onebit}
\end{equation}
Equivalently, some pair parity or the total parity is constant on $D$.
\end{theorem}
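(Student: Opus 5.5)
We prove both directions of \eqref{eq:k4-onebit} by specializing Theorem~\ref{thm:onebit} to $K_4$. First record what the right-hand side means: the nonzero elements of $E_{\rm even}$ are the six pair vectors $e_u+e_v$ and the all-ones vector $\mathbf 1$. The condition therefore says that some edge of $K_4$ is a pair-relation edge, or that the total parity $x_1\xor x_2\xor x_3\xor x_4$ is constant on $D$.

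\emph{Sufficiency.} We build a certificate in two cases. First suppose some edge $uv$ is a pair-relation edge. Then $u,v$ are pair-covered, and only the other two vertices $w,t$ may need to be switched. The key observation is that if $H$ is a forest, then $Z_1(H;\F_2)=0$, so flatness (ii) holds vacuously.
\begin{itemize}[leftmargin=*]
\item If both $w$ and $t$ are active and not pair-covered, every edge at $w$ or $t$ is a non-pair-relation edge. Take $H$ to be the path $u\!-\!w\!-\!t\!-\!v$, with the two edges at $w$ carrying opposite signs at $w$, and likewise at $t$.
\item If only $w$ needs switching, take the path $u\!-\!w\!-\!v$ with opposite signs at $w$.
\item If neither needs switching, take $H$ empty.
\end{itemize}
Condition (i) holds because every edge used is incident to an uncovered vertex. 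Next suppose no pair relation exists but $\mathbf 1\in\Rel(D)$. Take $H$ to be the $4$-cycle $1\!-\!2\!-\!3\!-\!4\!-\!1$ with signs opposite at every vertex. All four vertices are then switched. Condition (i) holds because no edge is a pair-relation edge. The cycle space is spanned by this one cycle, whose turn vector is $\mathbf 1\in\Rel(D)$, so $(H,q)$ is $D$-flat. Theorem~\ref{thm:onebit} then gives $m_{K_4}(D)\le2$.

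\emph{Necessity.} Take a certificate $(H,q)$ from Theorem~\ref{thm:onebit} and suppose, for contradiction, that $\Rel(D)\cap E_{\rm even}=\{0\}$. Then no pair-relation edge exists, so by (iii) every active vertex is switched, hence has degree at least $2$ in $H$. If two vertices $u,v$ were inactive, then $e_u,e_v\in\Rel(D)$ and so $e_u+e_v\in\Rel(D)$, a contradiction. So at least three vertices are active and of degree $\ge2$ in $H$. Any subgraph of $K_4$ with three vertices of degree $\ge2$ contains a cycle, so $Z_1(H;\F_2)\ne0$. Flatness puts every turn vector in $\Rel(D)$. Under our assumption, $\Rel(D)$ contains no nonzero even vector, hence at most one odd vector: two distinct odd vectors would sum to a nonzero even one. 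Consequently the turn map on $Z_1(H;\F_2)$ has image of dimension at most one, and that image contains no nonzero even vector.

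\emph{Main obstacle.} It remains to show that a signing in which the relevant vertices are switched is incompatible with this small image. We split by the shape of $H$ up to symmetry: a triangle plus pendant structure, the $4$-cycle, the $4$-cycle with one chord, $K_4$ minus an edge, and $K_4$. We use two facts.
\begin{itemize}[leftmargin=*]
\item If all cycle turns through a vertex $v$ vanish at $v$, then, comparing the cycles through $v$ pairwise, all signs at $v$ on its cycle edges agree. Pendant edges do not lie on cycles, but a vertex of degree $\ge2$ in these small graphs sits on a cycle through its switched pair once we also use the inactive vertex's $e_v\in\Rel(D)$.
\item An inactive vertex $v$ (at most one exists) contributes $e_v\in\Rel(D)$. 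This is itself odd, so any further odd turn vector distinct from $e_v$ gives a contradiction.
\end{itemize}
In each case we show that switching three or four vertices forces either a nonzero even turn vector or two distinct odd vectors in $\Rel(D)$. We expect this finite case check to be the delicate part. In particular, the triangle-with-inactive-fourth-vertex case needs care, and we would treat it first by direct computation of the triangle's turn vector.
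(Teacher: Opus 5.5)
Your sufficiency direction is correct and is essentially the paper's construction. When some pair parity is constant you use a forest certificate, for which flatness is vacuous; otherwise you use the oppositely signed Hamiltonian cycle with turn $1111$. The opening of your necessity argument is also sound. Assuming $\Rel(D)\cap E_{\rm even}=\{0\}$, there is no pair-relation edge, at most one inactive vertex $d$, every active vertex is switched, and $\Rel(D)\subseteq\{0,o\}$ for a single odd $o$, with $o=e_d$ if $d$ exists.

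The gap is that you stop exactly where the content lies. What must still be proved is the statement the paper isolates as Lemma~\ref{lem:k4-turns}:
(a) if $d$ is inactive and the three active vertices are switched, some simple cycle has $\turn_q(C)\notin\{0,e_d\}$;
(b) if all four vertices are switched, the turn vectors cannot all lie in $\{0,o\}$ with $o$ odd.
You only announce a case split, so necessity is not yet established.

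Moreover, the tool you propose for the case check is false as stated. A switched vertex of degree at least two need not have its two oppositely signed edges on a common cycle, and $e_d\in\Rel(D)$ has no bearing on that purely graph-theoretic question. In the paw (triangle $abc$ with pendant edge $cd$, $d$ inactive), $c$ can be switched between $cd$ and a triangle edge while no cycle turns at $c$. Also, ``the $4$-cycle with one chord'' and ``$K_4$ minus an edge'' are the same graph.

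The missing idea is to locate a switched vertex whose switching is forced to appear in some cycle turn.

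For (a), the paper uses blocks. Two edges at a vertex lying in the same $2$-connected block lie on a common simple cycle. So if every turn is in $\{0,e_d\}$, each switched active vertex must be a cut vertex of $H$. But a graph on four vertices has at most two cut vertices. In the paw, for instance, the degree-two vertices $a,b$ give triangle turn $e_a\xor e_b$ or $e_a\xor e_b\xor e_c$, neither of which is in $\{0,e_d\}$.

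For (b), the possible shapes are $C_4$, the diamond and $K_4$.
\begin{itemize}[leftmargin=*]
\item The $4$-cycle has turn $1111$.
\item In the diamond with $34$ missing, the triangles $T_3=1\!-\!2\!-\!3$ and $T_4=1\!-\!2\!-\!4$ turn at $3$ and $4$ respectively, so both would equal $o$. Then the turn of $1\!-\!3\!-\!2\!-\!4\!-\!1$, which is $\turn_q(T_3)\xor\turn_q(T_4)$, would vanish, although this cycle turns at $3$.
\item In $K_4$, each vertex turns on exactly two of the three Hamiltonian cycles, so their turn weights sum to $8$. This is not a sum of three terms from $\{0,\wt(o)\}$ with $\wt(o)\in\{1,3\}$.
\end{itemize}
With these arguments supplied, your contradiction goes through.
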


\begin{proof}[Sufficiency]
Suppose first that a pair parity is constant.  Relabel so that $12$ is a
pair-relation edge, and sign the path $1-3-4-2$ oppositely at $3$ and $4$
(Fig.~\ref{fig:k4-certificates}(a)).  Delete every path edge that is itself a
pair-relation edge.  The remaining forest contains no such edge and is
$D$-flat automatically: its cycle space is $\{0\}$.
Vertices $1,2$ are pair-covered; each of $3,4$ either is pair-covered
by a deleted incident edge or retains both opposite signs and is switched.
Theorem~\ref{thm:onebit} therefore gives a scheme with at most two states.

If the total parity is constant and there is no pair-relation edge, sign a
Hamiltonian cycle oppositely at every vertex (Fig.~\ref{fig:k4-certificates}(b)).
Its cycle space has one nonzero element, whose turn vector is $1111\in\Rel(D)$.
Thus the cycle is $D$-flat and switches every vertex, so the same theorem
applies.  The case with a pair-relation edge was already handled above.
\end{proof}
The converse is proved in Appendix~\ref{app:k4-signed} using the signed-graph
alternatives in Lemma~\ref{lem:k4-turns}.

\begin{figure}[ht]
\centering
\begin{tikzpicture}[x=1cm,y=1cm,font=\footnotesize,
  vertex/.style={draw,circle,fill=white,inner sep=1.8pt},
  incidence/.style={fill=white,inner sep=.8pt,font=\scriptsize}]
\begin{scope}
\node[vertex] (a1) at (0,1.5) {$1$};
\node[vertex] (a2) at (2.2,1.5) {$2$};
\node[vertex] (a3) at (0,0) {$3$};
\node[vertex] (a4) at (2.2,0) {$4$};
\draw[dashed] (a1) -- node[above,font=\scriptsize] {pair relation} (a2);
\draw[thick] (a1) -- node[incidence,pos=.19,right] {$0$}
  node[incidence,pos=.81,right] {$0$} (a3);
\draw[thick] (a3) -- node[incidence,pos=.16,above] {$1$}
  node[incidence,pos=.84,above] {$0$} (a4);
\draw[thick] (a4) -- node[incidence,pos=.19,left] {$1$}
  node[incidence,pos=.81,left] {$0$} (a2);
\node at (1.1,-.48) {(a) Pair parity};
\end{scope}
\begin{scope}[xshift=3.7cm]
\node[vertex] (b1) at (0,1.5) {$1$};
\node[vertex] (b2) at (2.2,1.5) {$2$};
\node[vertex] (b3) at (0,0) {$3$};
\node[vertex] (b4) at (2.2,0) {$4$};
\draw[thick] (b1) -- node[incidence,pos=.19,right] {$0$}
  node[incidence,pos=.81,right] {$0$} (b3);
\draw[thick] (b3) -- node[incidence,pos=.16,above] {$1$}
  node[incidence,pos=.84,above] {$0$} (b4);
\draw[thick] (b4) -- node[incidence,pos=.19,left] {$1$}
  node[incidence,pos=.81,left] {$0$} (b2);
\draw[thick] (b2) -- node[incidence,pos=.16,below] {$1$}
  node[incidence,pos=.84,below] {$1$} (b1);
\node at (1.1,-.48) {(b) Total parity};
\end{scope}
\end{tikzpicture}
\caption{One-bit certificates for $K_4$.  Labels $0,1$ belong to incidences,
not whole edges.  In (a), dashed edge $12$ gives pair coverage and lies outside
$H$; any pair-relation edges on the solid path are also deleted from $H$,
providing pair coverage at their endpoints.  In (b), no pair-relation edge
exists: every vertex has opposite incidence signs, so the cycle turn is $1111$.}
\label{fig:k4-certificates}
\end{figure}

\subsection{Minimal obstructions}

To organize the remaining domains, let
\begin{equation}
 \cG:=\F_2^4\rtimes S_4
 \label{eq:gamma4}
\end{equation}
act by independent bit negations and coordinate permutations.  Write
$D\sim_{\cG}D'$ for equivalence under this action, which preserves $m_{K_4}$,
and define
\begin{equation}
 T_6:=\{0000,0111,1001,1010\},
 \label{eq:T6}
\end{equation}
\begin{equation}
 C_5:=T_6\cup\{1100\}.
 \label{eq:C5}
\end{equation}

Restriction preserves a scheme: if $D'\subseteq D$ is nonempty, then
$m_{K_4}(D')\le m_{K_4}(D)$ (Lemma~\ref{lem:heredity}).  Thus a small
subdomain can witness that a larger domain needs more than two states.
Call $\{x^0,x^1,x^2,x^3\}$ an \emph{odd affine tetrahedron} if the three
differences $x^i\xor x^0$ ($i=1,2,3$) are linearly independent and the
unique nonzero normal to their span has odd Hamming weight.
Theorem~\ref{thm:odd-tetra} shows that these are exactly the
inclusion-minimal domains with $m>2$.  Write $\cO_1,\ldots,\cO_6$ for
their six $\cG$-orbits, indexed by the representatives in
Table~\ref{tab:orbit-summary}, which also gives their exact support values.
The geometric proof and full orbit enumeration are in
Appendix~\ref{app:k4-signed}.

\begin{table}[ht]
\caption{Minimal one-bit obstructions and exact support values.
The listed steps are complemented by the four-state construction below for
$\cO_1$--$\cO_5$ and the lower bound $m>2$ for $\cO_6$.}
\label{tab:orbit-summary}
\centering
\footnotesize
\setlength{\tabcolsep}{3pt}
\begin{tabular}{@{}c l c l@{}}
\toprule
Orbit & Representative & $m_{K_4}$ & Proof step\\
\midrule
$\cO_1$ & $\{0000,0001,0010,0100\}$ & $4$ & direct lower bound\\
$\cO_2$ & $\{0000,0001,0011,0100\}$ & $4$ & exhaustive search\\
$\cO_3$ & $\{0000,0010,0011,0101\}$ & $4$ & exhaustive search\\
$\cO_4$ & $\{0000,0011,0101,1000\}$ & $4$ & exhaustive search\\
$\cO_5$ & $\{0000,0101,1000,1011\}$ & $4$ & exhaustive search\\
$\cO_6$ & $\{0000,0111,1001,1010\}=T_6$ & $3$ & $\Z_3$ scheme\\
\bottomrule
\end{tabular}
\end{table}

The three-state test must cover arbitrary source weights and finite share
alphabets.  Set
\[
 \cU=(1/3,1/3,1/3),\qquad \cQ=(1/2,1/4,1/4).
\]
For a domain with $m>2$, the weight-collapse proposition leaves only these
two distributions, up to permutation.  The exact normal form then replaces
arbitrary alphabets by finitely many local-view labels and the edge equalities
they force.  A configuration is accepted when these equalities preserve the
distinction between every pair of different local-view labels.
Every scheme induces an accepted configuration, and every accepted
configuration yields a scheme, so the search is complete without a linearity
assumption.  The reduction and normal form are proved in
Appendix~\ref{app:three-state}
(Proposition~\ref{prop:weight-collapse}, Theorem~\ref{thm:three-normal});
implementation details are in Appendix~\ref{app:computation}.

\subsection{Explicit three- and four-state schemes}

The five-point domain $C_5$ has a useful arithmetic description:
\begin{equation}
 C_5=\{x\in\bits^4:x_1\equiv x_2+x_3+x_4\pmod 3\}.
 \label{eq:C5-congruence}
\end{equation}
Here the bits $x_i\in\{0,1\}$ are viewed as integers and the addition in
\eqref{eq:C5-congruence} is modulo $3$, not XOR.
Indeed, for $x\in C_5$ with $x_1=0$, the defining congruence forces the
integer sum $x_2+x_3+x_4$ to be $0$ or $3$, giving two points.
For $x\in C_5$ with $x_1=1$, it forces the sum to be $1$, giving the other
three points.
This congruence closes a masked sum around the cycle $1-2-3-4-1$:
successive differences recover the local bits.  Only four edges carry
nonconstant shares; the diagonals $13$ and $24$ are constant
(Fig.~\ref{fig:c5-cycle}).

The roles of $T_6$ and $C_5$ are distinct.  The four-point set
$T_6=C_5\setminus\{1100\}$ is a minimal one-bit obstruction, which supplies
the lower bound and the starting point of the classification argument.
The congruence supplies a three-state construction on the larger set $C_5$,
and therefore on $T_6$ as well.

\begin{figure}[!ht]
\centering
\begin{tikzpicture}[x=1cm,y=1cm,font=\footnotesize,
  vertex/.style={draw=blue!45!black,circle,fill=white,
    minimum size=16pt,inner sep=1.4pt,line width=.65pt},
  cycleedge/.style={draw=blue!45!black,line width=.95pt},
  zeroedge/.style={draw=black!55,densely dashed,line width=.5pt},
  zerolabel/.style={fill=white,inner sep=1.2pt,text=black!70}]
\node[vertex] (z1) at (0,1.9) {$1$};
\node[vertex] (z2) at (3.8,1.9) {$2$};
\node[vertex] (z3) at (3.8,0) {$3$};
\node[vertex] (z4) at (0,0) {$4$};
\draw[zeroedge] (z1) -- node[zerolabel,pos=.27,below,sloped] {$0$} (z3);
\draw[zeroedge] (z2) -- node[zerolabel,pos=.27,below,sloped] {$0$} (z4);
\draw[cycleedge] (z1) -- node[above=3pt] {$R$} (z2);
\draw[cycleedge] (z2) -- node[right=4pt] {$R-x_2$} (z3);
\draw[cycleedge] (z4) -- node[below=3pt] {$R-x_2-x_3$} (z3);
\draw[cycleedge] (z1) -- node[left=4pt] {$R-x_1$} (z4);
\end{tikzpicture}
\caption{The three-state scheme on $C_5$, with uniform $R\in\Z_3$.
Cycle edges carry the displayed shares; dashed diagonals carry zero.
All displayed share arithmetic is in $\Z_3$.  At vertex $4$, the congruence
\eqref{eq:C5-congruence} gives
$W_{34}-W_{14}=x_1-x_2-x_3=x_4$.}
\label{fig:c5-cycle}
\end{figure}

\begin{theorem}[Uniform $\Z_3$ scheme]\label{thm:C5-scheme}
Every binary domain contained in $C_5$ has a three-state scheme.  With
$R\sim\Unif(\Z_3)$, set, in $\Z_3$,
\begin{align}
 W_{12}&=R,            &W_{13}&=0,          &W_{14}&=R-x_1,\nonumber\\
 W_{23}&=R-x_2,        &W_{24}&=0,          &W_{34}&=R-x_2-x_3.
 \label{eq:C5-shares}
\end{align}
\end{theorem}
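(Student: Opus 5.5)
The plan is to verify the explicit $\Z_3$ scheme directly on $C_5$ and then pass to subdomains by restriction (Lemma~\ref{lem:heredity}). If the shares in \eqref{eq:C5-shares} give a correct, private scheme on $C_5$, the same share maps restricted to $D\times\Z_3$ give one on any nonempty $D\subseteq C_5$: correctness is a universally quantified condition over pairs of inputs, and privacy compares only pairs of inputs that both lie in $D$. The source $R\sim\Unif(\Z_3)$ has full support and does not depend on $x$, so it remains to check correctness \eqref{eq:correctness} and privacy \eqref{eq:privacy} at each of the four vertices, treating bits as elements of $\Z_3$.

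\emph{Correctness.} At vertex $1$, the view is $(R,0,R-x_1)$, and $W_{12}-W_{14}=x_1$. At vertex $2$, the view is $(R,R-x_2,0)$, and $W_{12}-W_{23}=x_2$. At vertex $3$, the view is $(0,R-x_2,R-x_2-x_3)$, and $W_{23}-W_{34}=x_3$. At vertex $4$, the view is $(R-x_1,0,R-x_2-x_3)$, and $W_{34}-W_{14}=x_1-x_2-x_3$, which equals $x_4$ by the congruence \eqref{eq:C5-congruence}. In each case the bit is recovered as a difference in $\{0,1\}\subset\Z_3$. Because the decoded value is a function of the view, equal views force equal local bits.

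\emph{Privacy.} At each vertex I will rewrite the view in terms of a single uniform variable $U_v$, obtained by translating $R$ by an element of $\Z_3$ that depends on $x$. Such a translate is still uniform on $\Z_3$.
\begin{itemize}
\item At vertex $1$, take $U_1=R$; the view is $(U_1,0,U_1-x_1)$.
\item At vertex $2$, take $U_2=R$; the view is $(U_2,U_2-x_2,0)$.
\item At vertex $3$, take $U_3=R-x_2$; the view is $(0,U_3,U_3-x_3)$.
\item At vertex $4$, take $U_4=R-x_1$; the view is $(U_4,0,U_4+x_1-x_2-x_3)=(U_4,0,U_4+x_4)$, again using the congruence.
\end{itemize}
In every case the view is a fixed function of $(U_v,x_v)$ with $U_v$ uniform, so its distribution depends only on $x_v$.

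The only nontrivial point is vertex $4$, where the congruence defining $C_5$ must be invoked both for decoding and for privacy. Everything else is a routine check, and no obstacle is expected beyond confirming the arithmetic of \eqref{eq:C5-congruence}, which the text preceding the theorem has already established.
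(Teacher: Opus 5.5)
Your proposal is correct and matches the paper's proof essentially step for step. It uses the same subtraction decoders at each vertex, the same uniform translates $U_3=R-x_2$ and $U_4=R-x_1$ with the congruence \eqref{eq:C5-congruence} invoked at vertex~$4$, and restriction to pass from $C_5$ to its subdomains.
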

\begin{proof}
Parties $1$ and $2$ see $(R,0,R-x_1)$ and $(R,R-x_2,0)$ and recover their
bits by subtraction.  With $U=R-x_2$, party $3$ sees $(0,U,U-x_3)$; with
$U=R-x_1$ and $x_1-x_2-x_3\equiv x_4\pmod3$, party $4$ sees
$(U,0,U+x_4)$.  Thus each party reconstructs, while uniformity of $U$ makes
its complete-view distribution depend only on its local bit.  Restricting the
input domain preserves the scheme.
\end{proof}

A direct row reduction of the difference vectors based at $0000$ gives
\begin{equation}
 \Rel(T_6)=\operatorname{span}\{1011\},
 \qquad
 \Rel(C_5)=\{0\}.
 \label{eq:T6C5-rel}
\end{equation}
Thus neither relation space contains a nonzero even vector.
Theorem~\ref{thm:k4-onebit} gives $m>2$, while
Theorem~\ref{thm:C5-scheme} gives
$m\le3$.
Therefore
\begin{equation}
  m_{K_4}(T_6)=m_{K_4}(C_5)=3.
  \label{eq:T6C5-values}
\end{equation}

For the universal upper bound on $K_4$, split its six edges into the paths
$2-1-3-4$ and $1-4-2-3$, masked by independent uniform bits $R$ and $S$.
Along each path, start with its mask and XOR in the secret at each internal
vertex to obtain the next share.  Every vertex is internal to exactly one
path and an endpoint of the other.  Its internal pair recovers its bit;
for each fixed input, the remaining share is a uniformly masked bit
independent of that pair.  This explains the view
$(U_v,U_v\xor x_v,Z_v)$ used in the proof below;
Fig.~\ref{fig:k4-two-paths} displays the two paths.

\begin{figure}[!ht]
\centering
\begin{tikzpicture}[x=1cm,y=1cm,font=\footnotesize,
  vertex/.style={draw=black!75,circle,fill=white,
    minimum size=16pt,inner sep=1.4pt,line width=.6pt},
  redge/.style={draw=blue!45!black,line width=.95pt},
  sedge/.style={draw=orange!55!black,line width=.95pt,
    dash pattern=on 3pt off 2pt},
  rinternal/.style={vertex,draw=blue!45!black,fill=blue!12},
  sinternal/.style={vertex,draw=orange!55!black,fill=orange!15}]
\node[vertex] (r2) at (0,1.25) {$2$};
\node[rinternal] (r1) at (2.3,1.25) {$1$};
\node[rinternal] (r3) at (4.6,1.25) {$3$};
\node[vertex] (r4) at (6.9,1.25) {$4$};
\draw[redge] (r2) -- node[above=4pt] {$R$} (r1);
\draw[redge] (r1) -- node[above=4pt] {$R\xor x_1$} (r3);
\draw[redge] (r3) -- node[above=4pt] {$R\xor x_1\xor x_3$} (r4);
\node[vertex] (s1) at (0,0) {$1$};
\node[sinternal] (s4) at (2.3,0) {$4$};
\node[sinternal] (s2) at (4.6,0) {$2$};
\node[vertex] (s3) at (6.9,0) {$3$};
\draw[sedge] (s1) -- node[above=4pt] {$S$} (s4);
\draw[sedge] (s4) -- node[above=4pt] {$S\xor x_4$} (s2);
\draw[sedge] (s2) -- node[above=4pt] {$S\xor x_4\xor x_2$} (s3);
\end{tikzpicture}
\caption{The four-state scheme: two edge-disjoint paths of the same $K_4$,
with independent uniform bits $R,S$.  Repeated numbers denote the same
vertex.  Filled vertices are internal to their path: each vertex reconstructs
on one path and receives an independently masked share from the other.}
\label{fig:k4-two-paths}
\end{figure}
\FloatBarrier

\begin{proposition}[Four-state upper bound]\label{prop:k4-upper}
For every nonempty $D\subseteq\F_2^4$, one has $m_{K_4}(D)\le4$.
\end{proposition}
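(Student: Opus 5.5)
We construct the scheme explicitly from the two edge-disjoint paths of Fig.~\ref{fig:k4-two-paths}. The source is $(R,S)$ uniform on $\F_2^2$, which gives four states. The shares are
\[
 W_{12}=R,\quad W_{13}=R\xor x_1,\quad W_{34}=R\xor x_1\xor x_3,
\]
\[
 W_{14}=S,\quad W_{24}=S\xor x_4,\quad W_{23}=S\xor x_4\xor x_2.
\]
These formulas make sense for every $x\in\F_2^4$, so they define a scheme on any nonempty $D$. We never use any structure of $D$. It is therefore enough to verify correctness and privacy for all inputs. Alternatively, one can verify them on $\F_2^4$ and invoke restriction (Lemma~\ref{lem:heredity}).

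The key step is to write each vertex's view in the form $(U_v,U_v\xor x_v,Z_v)$, where $(U_v,Z_v)$ is uniform on $\F_2^2$ for every fixed $x$.

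\emph{Vertex $1$.} It sees $(W_{12},W_{13},W_{14})=(R,R\xor x_1,S)$. Take $U_1=R$ and $Z_1=S$.

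\emph{Vertex $3$.} It sees $(W_{13},W_{34},W_{23})$. Take $U_3=R\xor x_1$ and $Z_3=S\xor x_4\xor x_2$.

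\emph{Vertex $4$.} It sees $(W_{14},W_{24},W_{34})$. Take $U_4=S$ and $Z_4=R\xor x_1\xor x_3$.

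\emph{Vertex $2$.} It sees $(W_{12},W_{24},W_{23})$. Take $U_2=S\xor x_4$ and $Z_2=R$.

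In each case, for fixed $x$, the map $(R,S)\mapsto(U_v,Z_v)$ is an affine bijection of $\F_2^2$. One coordinate depends only on $R$, the other only on $S$, and each is an XOR shift. Hence $(U_v,Z_v)$ is uniform.

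Correctness follows because $x_v$ equals the XOR of the first two coordinates of the view. So equal views force equal local bits, which gives \eqref{eq:correctness}.

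For privacy, the view is the image of the uniform pair $(U_v,Z_v)$ under the map $(u,z)\mapsto(u,u\xor x_v,z)$. This map depends only on $x_v$, so the view distribution depends only on $x_v$, which gives \eqref{eq:privacy}. The source space has size $4$ and full support, so $m_{K_4}(D)\le4$.

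There is no real obstacle here. The only care needed is bookkeeping: confirm that each of the six edges appears exactly once across the two paths, and that every vertex is internal to exactly one path. Then each view has two shares from its internal path and one independent share from the other path, as in the coordinate assignments above.
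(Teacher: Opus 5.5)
Your proposal is correct and matches the paper's proof: you use the same two-path shares and the same choice of $(U_v,Z_v)$ at each vertex. The justification is also the paper's: these pairs are uniform for every fixed input, so each view reveals $x_v$ by XOR and has a distribution depending only on $x_v$.
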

\begin{proof}
Let $R,S$ be independent uniform bits.  On the six edges of $K_4$, set
\begin{align}
 W_{12}&=R,& W_{13}&=R\xor x_1,&
 W_{34}&=R\xor x_1\xor x_3,\nonumber\\
 W_{14}&=S,& W_{24}&=S\xor x_4,&
 W_{23}&=S\xor x_4\xor x_2.
 \label{eq:k4-four-state}
\end{align}
Correctness follows from
\[
\begin{aligned}
 x_1&=W_{12}\xor W_{13},& x_2&=W_{23}\xor W_{24},\\
 x_3&=W_{13}\xor W_{34},& x_4&=W_{14}\xor W_{24}.
\end{aligned}
\]
Up to a fixed ordering of its incident edges, each vertex $v$ sees
$(U_v,U_v\xor x_v,Z_v)$.  For vertices $1,2,3,4$, respectively, take
\[
\begin{aligned}
 (U_1,Z_1)&=(R,S),\\
 (U_2,Z_2)&=(S\xor x_4,R),\\
 (U_3,Z_3)&=(R\xor x_1,S\xor x_4\xor x_2),\\
 (U_4,Z_4)&=(S,R\xor x_1\xor x_3).
\end{aligned}
\]
For every fixed input these are pairs of independent uniform bits, so the
complete view distribution depends only on $x_v$.  The source $(R,S)$ has
four states, and the same formulas work on every subdomain $D$.
\end{proof}

\subsection{Completing the classification}

These ingredients give the final step: among the six obstruction orbits only
$\cO_6$, represented by $T_6$, admits three states, and its only additional
input compatible with three states is $1100$.  We now make this closure
argument explicit.

\begin{theorem}[Exact binary $K_4$ classification]\label{thm:full-k4}
For every nonempty $D\subseteq\F_2^4$,
\begin{equation}
 m_{K_4}(D)=
 \begin{cases}
 1, & \text{if every active vertex is pair-covered},\\
 2, & \text{if not, and }\Rel(D)\cap E_{\rm even}\ne\{0\},\\
 3, & \text{if }D\sim_{\cG}T_6\text{ or }D\sim_{\cG}C_5,\\
 4, & \text{otherwise}.
 \end{cases}
 \label{eq:full-classification}
\end{equation}
Consequently,
\begin{equation}
  \rho_{K_4}(D)\in\{0,1,\log_2 3,2\},
  \label{eq:spectrum}
\end{equation}
and the value $\log_2 3$ occurs exactly in the orbits of $T_6$ and $C_5$.
\end{theorem}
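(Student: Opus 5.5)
The plan is to assemble the classification from the results already in place, treating the four cases in order.

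\emph{Values one and two.} The case $m=1$ is exactly Proposition~\ref{prop:zero}. The case $m\le2$ is characterized by Theorem~\ref{thm:k4-onebit}. Hence $m=2$ holds exactly when the condition of Theorem~\ref{thm:k4-onebit} holds and Proposition~\ref{prop:zero} fails. This gives the first two lines of \eqref{eq:full-classification}.

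\emph{The orbits of $T_6$ and $C_5$.} For these domains, \eqref{eq:T6C5-values} together with $\cG$-invariance of $m_{K_4}$ gives $m=3$.

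\emph{All remaining domains.} Here $m\le4$ by Proposition~\ref{prop:k4-upper}. So it remains to show that every domain with $m>2$ and $m\le3$ lies in the orbit of $T_6$ or $C_5$.

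The argument for this last step goes through minimal obstructions. Let $D$ satisfy $m_{K_4}(D)=3$. By Theorem~\ref{thm:odd-tetra}, $D$ contains an odd affine tetrahedron $T$. By Lemma~\ref{lem:heredity}, $m_{K_4}(T)\le3$. The exhaustive search together with the direct lower bound for $\cO_1$ (Table~\ref{tab:orbit-summary}, made rigorous by Proposition~\ref{prop:weight-collapse} and Theorem~\ref{thm:three-normal}) rules out $\cO_1$ through $\cO_5$. Hence $T\in\cO_6$. After applying an element of $\cG$, which preserves $m$, we may assume $T=T_6$ and $T_6\subseteq D$.

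The next task is to show that $D\subseteq C_5$. Take any $y\in D\setminus T_6$. I would argue that whenever $y\ne1100$, the set $T_6\cup\{y\}$ contains an odd affine tetrahedron from one of $\cO_1$ through $\cO_5$, and then Lemma~\ref{lem:heredity} forces $m=4$.

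The concrete check is as follows. There are $11$ candidate points $y\notin C_5$. For each one, choose three points of $T_6$ such that the resulting four points are affinely independent with an odd-weight normal. Then identify the orbit of that tetrahedron, using orbit invariants such as the weight of the normal and the pattern of which vertices are active and which pairs are pair-covered, and confirm it is not $\cO_6$. Part of this work can be shortened using the stabilizer of $T_6$ in $\cG$, which groups the $11$ points into a few classes, and Appendix~\ref{app:k4-signed} supplies the orbit enumeration. Once this is done, $T_6\subseteq D\subseteq C_5$, so $D\in\{T_6,C_5\}$ up to $\cG$.

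This closure check is the main obstacle. One subtlety is that $T_6\cup\{y\}$ might fail to contain any odd tetrahedron outside $\cO_6$, for instance if every odd tetrahedron inside it lies in $\cO_6$. In that case I would iterate the argument using a different embedded copy of $T_6$: any additional point must lie in the corresponding $\cG$-image of $C_5$ for every copy of $T_6$ in $D$, and intersecting these constraints should force $y=1100$. If that also fails, the fallback is to run the three-state search of Appendix~\ref{app:computation} directly on the five-point sets $T_6\cup\{y\}$.

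The spectrum statement \eqref{eq:spectrum} then follows by taking logarithms of the four possible values of $m$. The claim that $\log_2 3$ occurs exactly in the orbits of $T_6$ and $C_5$ is precisely the closure argument above.
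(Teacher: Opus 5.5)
Your outline matches the paper's proof. Proposition~\ref{prop:zero} and Theorem~\ref{thm:k4-onebit} handle the first two cases. Theorem~\ref{thm:odd-tetra}, heredity and Theorem~\ref{thm:orbit-values} force the embedded tetrahedron into $\cO_6$, which you normalize to $T_6$. An extension step then forces $D\subseteq C_5$, and Proposition~\ref{prop:k4-upper} covers everything else.

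\textbf{The gap is the extension step.} It is the only part of the theorem not already supplied by earlier results, and you leave it as a promise. The method you sketch for it would also fail. An odd affine tetrahedron has $\Rel$ spanned by one odd vector $n$, so it never has a pair-relation edge. When $\wt(n)=3$, no coordinate is constant, so every vertex is active. Since $\cO_4$, $\cO_5$ and $\cO_6$ all have weight-three normals, your invariants (normal weight, active pattern, pair-covered pairs) coincide on these three orbits. A weight-one normal can arise only from $\{0000,1001,1010,y\}$ with $y_2=0$, because $x_2=0$ is the only coordinate constant on three points of $T_6$. This gives an odd tetrahedron exactly for $y\in\{0001,0010,1000,1011\}$. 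For the other seven points $0011,0100,0101,0110,1101,1110,1111$, every available tetrahedron has a weight-three normal, and your test cannot separate $\cO_4,\cO_5$ from $\cO_6$.

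\textbf{How the paper fills it.} It uses the Hamming-distance spectrum, which is a complete invariant among the six orbits (Table~\ref{tab:orbits}). In particular, $\cO_6$ is the only orbit with no pair at distance one. The eleven points outside $C_5$ are exactly the cube points at distance one from some point of $T_6$; $1100$ is the unique outside point with no neighbour in $T_6$. So for each such $y$ it suffices to omit some $x^\ast\in T_6$ so that the remaining four-set is an odd affine tetrahedron containing $y$ and a neighbour of $y$. Table~\ref{tab:extension} records such choices. For example, $y=1111$ with $x^\ast=0000$ gives $\{0111,1001,1010,1111\}$, with normal $0111$ and spectrum $\{1,2^3,3^2\}$, hence $\cO_4$.

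\textbf{Your fallbacks do not replace this.} Re-running the argument on another copy of $T_6$ inside $D$ presupposes the same extension property for that copy, so it is circular. A direct three-state search on the sets $T_6\cup\{y\}$ would be a new computation that you have not carried out and the paper does not need.
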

\begin{proof}
Proposition~\ref{prop:zero} gives the first case, and
Theorem~\ref{thm:k4-onebit} gives the second after excluding the first.
Suppose now that $m_{K_4}(D)>2$ but $m_{K_4}(D)\le3$.
By Theorem~\ref{thm:odd-tetra}, $D$ contains a four-point minimal obstruction
$T$.  Heredity and the values in Table~\ref{tab:orbit-summary}
(proved in Theorem~\ref{thm:orbit-values}) force $T\in\cO_6$.
Apply a symmetry to normalize $T=T_6$.

For every $p\notin T_6\cup\{1100\}$, the extension check in
Table~\ref{tab:extension} in Appendix~\ref{app:k4-signed} exhibits an
$\cO_1$--$\cO_5$ obstruction inside $T_6\cup\{p\}$.  Its minimum support
is four, so heredity forbids $p\in D$.  Therefore
$T_6\subseteq D\subseteq C_5$, giving exactly $D=T_6$ or $D=C_5$.
Conversely, the construction in Theorem~\ref{thm:C5-scheme} and the
relation spaces in \eqref{eq:T6C5-rel} give minimum support three for
both sets.  Thus precisely these two orbits have minimum support three.
Every remaining domain needs at least four states, and
Proposition~\ref{prop:k4-upper} supplies a four-state scheme for every $K_4$
domain.
\end{proof}

\section{Conclusion}

Affine relations determine when one random bit suffices on $K_4$, but not
the full cost: $C_5$ and $\F_2^4$ share their affine hull yet need three
and four source states, respectively.
The classification also limits a known lower-bound route for secure
computation: a reduction to three-secret sharing yields a three-bit
lower bound for 1-private three-party $\mathsf{AND}$~\cite{Anilkumar2024}.
However, every binary $K_4$ domain admits a two-random-bit scheme,
so a reduction retaining only these sharing constraints cannot establish
a larger lower bound.
Natural next steps are randomness criteria on larger graphs and
reductions that retain the constraints of interaction.

\section*{Acknowledgments}

The author thanks Akin \"Unal for suggesting this problem and for detailed
comments on the manuscript, and Krzysztof Pietrzak for helpful discussions.

The author used OpenAI's ChatGPT/Codex for language editing, \LaTeX\ formatting,
and consistency checking.  All AI-assisted suggestions were reviewed by the
author, who remains responsible for the mathematical content, citations,
computational claims, and conclusions.

\clearpage
\appendices

\section{Proofs for the Exact One-Bit Theorem}\label{app:one-bit-proofs}

\subsection{Basic reductions}

\begin{lemma}[Heredity]\label{lem:heredity}
If $D'\subseteq D$ is nonempty, then $m_G(D')\le m_G(D)$.
\end{lemma}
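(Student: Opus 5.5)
The plan is to restrict an optimal scheme for $D$ to the smaller domain and check that nothing is lost. If $m_G(D)=+\infty$ there is nothing to prove, so fix a correct, private scheme $((R,\mu),(W_e)_{e\in E})$ for $D$ with $|R|=m_G(D)$. For the restricted scheme we keep the same probability space $(R,\mu)$, which still has full support and is input-independent. We also keep the same alphabets $\mathcal W_e$, and use the share maps
\[
 W'_e:=W_e|_{D'\times R}:D'\times R\to\mathcal W_e .
\]
The views are then $\View'_v(x,r)=\View_v(x,r)$ for every $x\in D'$.

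Next we verify the two requirements by specializing the universal quantifiers. Correctness \eqref{eq:correctness} for $D'$ asks that $\View'_v(x,r)=\View'_v(y,s)$ imply $x_v=y_v$ for $x,y\in D'$. This is the instance of the same implication for $D$ with $x,y\in D'\subseteq D$. Privacy \eqref{eq:privacy} for $D'$ asks that $\View'_v(x,\mathsf R)\stackrel{d}{=}\View'_v(y,\mathsf R)$ whenever $x,y\in D'$ and $x_v=y_v$. Both sides are the same random variables as in the scheme for $D$, with the same source $\mathsf R\sim\mu$, so the equality in distribution is inherited.

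The restricted object is therefore a correct, private scheme for $D'$ using $|R|$ source states, and taking the minimum gives $m_G(D')\le|R|=m_G(D)$. One point deserves a check: the definition of $m_G$ counts $|R|$ for a full-support source, and restriction changes neither $R$ nor $\mu$, so full support is preserved.

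Changes in which vertices are active cause no difficulty. A vertex may be active in $D$ and inactive in $D'$, and the correctness condition then becomes weaker (vacuous across different local bits), never stronger. There is no genuine obstacle here; the only care needed is to note that nonemptiness of $D'$ keeps $m_G(D')$ well defined, and that the infinite case is trivial.
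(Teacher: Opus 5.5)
Your proposal is correct and is essentially the paper's proof: restrict every share map to $D'\times R$, keep the same full-support source, and observe that correctness and privacy on $D'$ are instances of the conditions already valid on $D$.
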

\begin{proof}[Proof of Lemma~\ref{lem:heredity}]
Restrict every share map to $D'\times R$.  Both
\eqref{eq:correctness} and \eqref{eq:privacy} are restrictions of conditions
already valid on $D$.
\end{proof}

\begin{proof}[Proof of Proposition~\ref{prop:zero}]
In a deterministic scheme, privacy makes every edge share $W_{uv}$ a function
of $x_v$ at endpoint $v$ and a function of $x_u$ at endpoint $u$.  If $v$ is
active, correctness forces some incident share to be nonconstant in $x_v$.
Because the secrets are binary, that share determines $x_v$ from $x_u$, and
$x_u\xor x_v$ is constant on $D$.

Conversely, orient each pair-relation edge $u\to v$ and set $W_{uv}=x_u$.
At $v$ this equals $x_v\xor c_{uv}$ for the public constant
$c_{uv}=x_u\xor x_v$.  Set constants on all remaining edges.  Every active
vertex reconstructs from one pair edge, and every incident coordinate is a
function of the local secret.
\end{proof}

The next observation removes a possible source-weight pathology at support two.

\begin{lemma}[Uniformization]\label{lem:uniform-two}
If a correct and private scheme has source support $R=\{0,1\}$ with arbitrary
positive probabilities, then replacing the source by the uniform distribution
on the same two states preserves correctness and privacy.
\end{lemma}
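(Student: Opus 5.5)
Correctness \eqref{eq:correctness} does not involve the source weights at all: it quantifies over all pairs $(x,r),(y,s)$ with $r,s\in R$, and the support is unchanged. So it survives the replacement of $\mu$ by the uniform distribution automatically. The whole proof therefore reduces to privacy \eqref{eq:privacy} under the uniform source.

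Fix a vertex $v$ and inputs $x,y\in D$ with $x_v=y_v$. Write $\mu=(p,1-p)$ with $0<p<1$. Under $\mu$, the view distribution at $x$ puts mass $p$ on $a_0=\View_v(x,0)$ and mass $1-p$ on $a_1=\View_v(x,1)$. Likewise, the distribution at $y$ puts mass $p$ on $b_0=\View_v(y,0)$ and mass $1-p$ on $b_1=\View_v(y,1)$. Privacy says these two distributions coincide. From this I want to conclude that the multisets $\{a_0,a_1\}$ and $\{b_0,b_1\}$ are equal, since the uniform view distribution is determined by that multiset alone.

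I will split into cases according to whether $a_0=a_1$.

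\emph{Case $a_0=a_1$.} The distribution at $x$ is a point mass at $a_0$. A two-state distribution with weights $p$ and $1-p$, both positive, is a point mass only if $b_0=b_1$, and then that common point is $a_0$. Both multisets are $\{a_0,a_0\}$.

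\emph{Case $a_0\ne a_1$.} The distribution at $x$ has support of size two. Hence $b_0\ne b_1$ and $\{b_0,b_1\}=\{a_0,a_1\}$.

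In both cases the uniform source gives $x$ and $y$ the same view distribution, namely mass $1/2$ on each element of the common multiset, which is exactly privacy under the uniform source.

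Nothing here is a real obstacle. The one point needing care is the case $p=1/2$ versus $p\neq1/2$: when $p\neq 1/2$ the matching is forced to be $a_0=b_0$, $a_1=b_1$, whereas the swapped matching can occur only when $p=1/2$. Neither case matters, because only equality of the supports-with-multiplicity is used. Positivity of $p$ and $1-p$, guaranteed by the full-support requirement in the definition, is what excludes degenerate matchings.
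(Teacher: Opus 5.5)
Your proof is correct and essentially follows the paper's route: both split on whether the two views at $x$ coincide, and both use positivity of the two weights to force the views at $x$ and $y$ to form the same multiset. The only difference is that you skip the paper's separate $p=1/2$ case and its state-by-state identification $a=c$, $b=d$, because equality of supports already determines the uniform view distribution.
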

\begin{proof}
Correctness depends only on the supported states.  For privacy, fix $v$ and
$x,y$ with $x_v=y_v$, and write
\[
\begin{aligned}
 a&=\View_v(x,0),& b&=\View_v(x,1),\\
 c&=\View_v(y,0),& d&=\View_v(y,1).
\end{aligned}
\]
Let the source probabilities be $p,1-p$.  If $p=1/2$, nothing changes.
Otherwise, if $a=b$, the view at $x$ is constant; equal distributions and
positive source probabilities force $c=d=a$.  If $a\ne b$, the view at $x$
has two distinct values of unequal probabilities $p,1-p$.
The equal distribution at $y$ identifies which value comes from each source
state, so $a=c$ and $b=d$.  In both cases, assigning probability $1/2$ to
each state preserves equality of the two view distributions.
\end{proof}

\subsection{Necessity}

Lemma~\ref{lem:local-switching} and the sufficiency direction of
Theorem~\ref{thm:onebit} are proved in Section~\ref{sec:onebit}.
We now supply the remaining argument using the graph of occurring endpoint
bits.

\begin{proof}[Completion of the proof of Theorem~\ref{thm:onebit}]
The deterministic case is Proposition~\ref{prop:zero}, with $H$ empty.
Otherwise Lemma~\ref{lem:uniform-two} makes the source uniform; fix the local
representations \eqref{eq:local-switching}.

For a non-pair edge $e=uv$, let $S_u,S_v$ be the occurring endpoint bits and
form the bipartite support graph $B_e$ on $S_u\sqcup S_v$, joining $(b,c)$ when
some $x\in D$ has $(x_u,x_v)=(b,c)$.  This graph is connected.  Indeed, this is
immediate when exactly one endpoint is active; if both are active, a
disconnected support with no isolated vertex is one of the two perfect
matchings and makes $x_u\xor x_v$ constant; and if neither is active, $e$ is a
pair relation.

Let $\nu_x$ be the distribution of $W_e(x,\mathsf R)$.  Endpoint privacy
propagates equality of these distributions along $B_e$, so all $\nu_x$ equal
one distribution $\nu_e$.  Thus either
every $W_e(x,\cdot)$ is the same constant, or all these maps are bijections onto
a common two-symbol set.  For a nonconstant edge, choose a binary labeling
$\lambda_e$ of that set.  For $w\in\{u,v\}$ and $b\in S_w$, the $e$-coordinate
of $A_w^b(s)$ is also bijective in $s$, hence
\[
 \lambda_e\bigl((A_w^b(s))_e\bigr)=s\xor f_{w,e}(b)
\]
for some $f_{w,e}:S_w\to\F_2$.  Equality of the endpoint descriptions of the edge
share gives
\[
 \alpha_u(x)\xor\alpha_v(x)=f_{u,e}(x_u)\xor f_{v,e}(x_v).
\]
Every function on an occurring binary coordinate is affine: write
$f_{w,e}(b)=d_{w,e}\xor q_w(e)b$, taking $q_w(e)=0$ when $w$ is inactive.
With $c_e=d_{u,e}\xor d_{v,e}$, the endpoint identity becomes
\eqref{eq:edge-compatibility}.  The common two-symbol labeling thus reduces
arbitrary edge alphabets to a compatibility condition on the switch bits.

Call the nonconstant non-pair edges \emph{regular}, let $H$ consist of them,
and use the coefficients $q_w(e)$ as incidence signs.  If an active vertex $v$
with no incident pair relation were not switched, all its regular incidences
would have one sign $q$ (take $q=0$ if there are none).  The preceding display
gives the same coordinate in $A_v^0(0)$ and $A_v^1(q)$ on every regular edge;
every other incident edge is constant.  The two views would coincide,
contradicting Lemma~\ref{lem:local-switching}.  Thus condition~(iii) holds.

Finally XOR \eqref{eq:edge-compatibility} over the edges selected by any
$z\in Z_1(H;\F_2)$.  Every switch bit $\alpha_v(x)$ occurs an even number of times,
so
\[
 \turn_q(z)\cdot x=\bigoplus_{e:z_e=1}c_e
\]
is constant on $D$.  Equation~\eqref{eq:rel-affine} gives
$\turn_q(z)\in\Rel(D)$.  Hence $(H,q)$ is $D$-flat, contains no pair-relation
edge, and satisfies all conditions of Theorem~\ref{thm:onebit}.
\end{proof}

\section{The \texorpdfstring{$K_4$}{K4} Criterion, Obstructions, and Extensions}\label{app:k4-signed}

We supply the finite graph argument used in Theorem~\ref{thm:k4-onebit}.
For a simple cycle $C$ in a signed selected subgraph, write $\turn_q(C)$ for
the turn vector of its edge set.

\begin{lemma}[Signed $K_4$ alternatives]\label{lem:k4-turns}
Let $H\subseteq K_4$ be signed.
\begin{enumerate}[label=(\roman*),leftmargin=*]
\item If all four vertices are switched, some simple cycle $C$ satisfies
$\wt(\turn_q(C))\in\{2,4\}$.
\item If three designated active vertices are switched and $d$ is the fourth
vertex, some simple cycle $C$ satisfies
$\turn_q(C)\notin\{0,e_d\}$.
\end{enumerate}
\end{lemma}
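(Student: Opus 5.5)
The plan is to work with the linear structure of the turn map and to use one special fact about $K_4$: on four vertices, every nonzero element of $Z_1(H;\F_2)$ is the indicator of a simple cycle. Two edge-disjoint simple cycles, or two cycles meeting in a single vertex, would need at least five vertices, so a nonzero even subgraph of $K_4$ is a triangle or a Hamiltonian four-cycle. Hence it suffices in both parts to find some $z\in Z_1(H;\F_2)$ whose image $\turn_q(z)$ has the required form, since that $z$ is automatically a simple cycle $C$. I will also use repeatedly a local observation. If $C$ passes through $v$ along edges $e,e'$, then $(\turn_q(C))_v=q_v(e)\xor q_v(e')$. So $v$ contributes a surviving bit exactly when $C$ enters and leaves $v$ through opposite signs.

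The first step is connectivity. For a switched vertex $v$, pick incident edges $vu,vw\in F$ of opposite sign. I want a simple cycle through the path $u-v-w$, that is, a $u$--$w$ path in $H-v$. If every vertex of $H$ has degree at least $2$, then $H$ is $C_4$, $K_4$ minus an edge, or $K_4$. Each of these is $2$-connected, so any two edges at a vertex lie on a common cycle. That cycle has turn bit $1$ at $v$.

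For part (i), let $I=\turn_q(Z_1(H;\F_2))\subseteq\F_2^4$, a linear subspace. All four vertices are switched, so every degree in $H$ is at least $2$. By the connectivity step, for every vertex $v$ some element of $I$ has $v$-coordinate $1$. Now suppose, for contradiction, that $I$ contains no nonzero vector of even weight. The parity-of-weight functional $a\mapsto\wt(a)\bmod 2$ is linear on $\F_2^4$, and under this assumption it is injective on $I$. Hence $\dim I\le1$ and $I=\{0,w\}$. Every vertex has some element of $I$ with a $1$ in its coordinate, and $w$ is the only nonzero element, so $w=1111$. But $1111$ has even weight, a contradiction. Therefore some nonzero $a\in I$ has $\wt(a)\in\{2,4\}$, and by the first paragraph it is $\turn_q(C)$ for a simple cycle $C$.

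For part (ii), let the designated vertices be $a,b,c$. It suffices to find a simple cycle $C$ through a switched vertex $v\in\{a,b,c\}$ that uses two oppositely signed edges at $v$. Then $(\turn_q(C))_v=1$ with $v\ne d$, so $\turn_q(C)\notin\{0,e_d\}$. I split on $\deg_H(d)$.
\begin{itemize}
\item If $\deg_H(d)\ge2$, then all degrees are at least $2$ and the connectivity step applies directly.
\item If $\deg_H(d)=0$, then $a,b,c$ each need two neighbours inside $\{a,b,c\}$, so $H$ is the triangle $abc$. Take $C=abc$ and $v=a$: both edges of $a$ lie on $C$, and they have opposite signs because $a$ is switched.
\item If $\deg_H(d)=1$, say $d$ is joined only to $a$, then $b$ and $c$ each have degree at least $2$ with neighbours inside $\{a,b,c\}$. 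This forces the triangle $abc$ into $H$ and gives $b$ exactly the two edges $ba,bc$. Since $b$ is switched, these two edges have opposite signs, so the triangle $abc$ has turn bit $1$ at $b$.
\end{itemize}

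The step I expect to need the most care is this case analysis in (ii). A switched vertex of degree $3$ may have its only opposite-sign pair passing through a pendant vertex $d$, which lies on no cycle. The fix is to switch attention to a degree-$2$ designated vertex, as in the $\deg_H(d)=1$ case; I will double-check that every low-degree configuration of $d$ produces such a vertex.
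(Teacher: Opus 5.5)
Your proof is correct, and in both parts it takes a different route from the paper.

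For (i), the paper splits on $|E(H)|\in\{4,5,6\}$ and exhibits the cycle by hand. A four-cycle turns at every vertex. For $K_4$ minus an edge, it XORs two triangles and compares weight parities. For $K_4$, it notes that each vertex turns on exactly two of the three Hamiltonian cycles, so their turn weights sum to $8$, which no three values in $\{0,1,3\}$ can do. You replace this with a single linear argument. By 2-connectivity, the image $I=\turn_q(Z_1(H;\F_2))$ contains, for each vertex, a vector with a $1$ there. If the parity functional were injective on $I$, this would force $I=\{0,1111\}$, a contradiction. You then pull back to a simple cycle, because every nonzero element of $Z_1(K_4;\F_2)$ is one; a quick count confirms this, since there are seven nonzero elements and seven simple cycles. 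Your version is shorter and isolates parity as the real obstruction. In fact, the application in Theorem~\ref{thm:k4-onebit} imposes flatness on all of $Z_1(H;\F_2)$, so even the pullback is optional there.

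For (ii), the paper argues by contradiction using the block decomposition. If every turn lies in $\{0,e_d\}$, then two edges at a designated vertex in a common 2-connected block lie on a common cycle and so share a sign. Each switched designated vertex is then a cut vertex, which is impossible for three vertices of a four-vertex graph. Your split on $\deg_H(d)$ is more hands-on and names the cycle explicitly. The cases $0$, $1$, $\ge2$ are exhaustive, so the concern in your closing paragraph is already resolved. The pendant case is handled correctly by passing to the degree-two vertex $b$. The paper's version is more structural and does not depend on listing the minimum-degree-two graphs on four vertices.
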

\begin{proof}
For (i), switchedness gives minimum selected degree two, hence
$|E(H)|\in\{4,5,6\}$.  If $|E(H)|=4$, every vertex has selected degree two,
so $H$ is a four-cycle $C$.  Since every vertex is switched, the two cycle
incidences at each vertex have different signs.  Therefore
$\turn_q(C)=(1,1,1,1)$, which has weight four.
If $|E(H)|=5$, relabel so that $34$ is missing.  Let $T_3$ and $T_4$ be the
triangles with vertex sequences $1-2-3-1$ and $1-2-4-1$, respectively.
The XOR of their edge indicators is the cycle $C$ with sequence
$1-3-2-4-1$, so linearity gives
$\turn_q(C)=\turn_q(T_3)\xor\turn_q(T_4)$.
Vertices $3,4$ turn on $C$, so its turn weight is at least two; weights two
and four already give the claim.  If its turn weight is three, the XOR
identity makes the two triangle turn vectors have opposite weight parity.
Both are nonzero: vertex $3$ has selected degree two and is switched, so
$T_3$ turns at $3$; likewise $T_4$ turns at $4$.
The even-weight vector is nonzero and supported on a triangle, hence has
weight two.  If $H=K_4$, each
vertex turns on exactly two of the three Hamiltonian cycles.  Their turn weights
sum to eight, impossible if all three lie in $\{0,1,3\}$.

For (ii), suppose every simple cycle has turn vector in $\{0,e_d\}$.
Decompose the graph into blocks: each bridge is a two-vertex block, and the
other nontrivial blocks are maximal 2-connected subgraphs.  Distinct blocks
can meet only at cut vertices.  At an active vertex, any two incident edges
in the same 2-connected block lie on a common cycle and therefore have the
same sign.  A switched active vertex
must then meet distinct blocks and is a cut vertex.  This would make all three
active vertices cut vertices, whereas a graph on four vertices has at most two:
each nontrivial component has two non-cut vertices, obtained as leaves of a
spanning tree, and isolated vertices are not cut vertices.
\end{proof}

\begin{proof}[Necessity in Theorem~\ref{thm:k4-onebit}]
Sufficiency was proved in the main text.  Let $A$ be the set of active coordinates and apply
Theorem~\ref{thm:onebit}.  If $|A|\le2$, two coordinates are inactive, and their
XOR is a constant weight-two relation.  Suppose $|A|=3$ and let $d$ be inactive,
so $e_d\in\Rel(D)$.  A pair-relation edge already gives weight two.  Otherwise
all three active vertices are switched in a flat certificate.  By
Lemma~\ref{lem:k4-turns}(ii), a simple cycle has turn vector
$a\notin\{0,e_d\}$.  Flatness gives $a\in\Rel(D)$, and also
$a\xor e_d\in\Rel(D)$.  Exactly one of these two nonzero vectors has even weight,
which in four coordinates is two or four.

Finally, if all four coordinates are active, either a pair-relation edge gives
weight two or all four vertices are switched.  Lemma~\ref{lem:k4-turns}(i) and
flatness then give a relation of weight two or four.
\end{proof}

% Queue the wide enumeration table early for bottom placement;
% the page break before the extension subsection keeps the reading order.
\begin{table*}[!b]
\caption{Full enumeration data for Appendix~\ref{app:orbit-enumeration}:
the six $\cG$-orbits of minimal one-bit obstructions.  A
representative is one domain chosen from each $\cG$-orbit.  The last column
records the exact support value and its justification.}
\label{tab:orbits}
\centering
\scriptsize
\resizebox{\textwidth}{!}{%
\begin{tabular}{c l l c c c l}
\toprule
Orbit & Representative & nonzero normal & active coordinates & distance spectrum & orbit size & support value / justification\\
\midrule
$\cO_1$ & $\{0000,0001,0010,0100\}$ & $1000$ & $2,3,4$ & $\{1^3,2^3\}$ & $64$ & $m=4$, direct proof\\
$\cO_2$ & $\{0000,0001,0011,0100\}$ & $1000$ & $2,3,4$ & $\{1^3,2^2,3\}$ & $192$ & $m=4$, exhaustive\\
$\cO_3$ & $\{0000,0010,0011,0101\}$ & $1000$ & $2,3,4$ & $\{1^2,2^3,3\}$ & $192$ & $m=4$, exhaustive\\
$\cO_4$ & $\{0000,0011,0101,1000\}$ & $0111$ & $1,2,3,4$ & $\{1,2^3,3^2\}$ & $192$ & $m=4$, exhaustive\\
$\cO_5$ & $\{0000,0101,1000,1011\}$ & $0111$ & $1,2,3,4$ & $\{1,2^2,3^3\}$ & $192$ & $m=4$, exhaustive\\
$\cO_6$ & $T_6=\{0000,0111,1001,1010\}$ & $1011$ & $1,2,3,4$ & $\{2^3,3^3\}$ & $64$ & $m=3$, explicit uniform-$\Z_3$ scheme\\
\bottomrule
\end{tabular}%
}
\end{table*}

\subsection{Odd affine tetrahedra}

We prove the geometric characterization of minimal obstructions stated
in Section~\ref{sec:k4-onebit}.

\begin{theorem}[Minimal one-bit obstructions]\label{thm:odd-tetra}
A nonempty domain $D\subseteq\F_2^4$ satisfies $m_{K_4}(D)>2$ if and only if it contains
an odd affine tetrahedron.  Consequently, the inclusion-minimal one-bit-negative
domains are exactly the odd affine tetrahedra, and every minimal obstruction has
four points.
\end{theorem}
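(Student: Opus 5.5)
The plan is to reduce the statement to linear algebra over $\F_2$ via Theorem~\ref{thm:k4-onebit}. That theorem says $m_{K_4}(D)>2$ if and only if $\Rel(D)\cap E_{\rm even}=\{0\}$. So it suffices to show that $\Rel(D)\cap E_{\rm even}=\{0\}$ holds exactly when $D$ contains an odd affine tetrahedron. No further scheme-theoretic input should be needed.

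For the ``if'' direction, let $T\subseteq D$ be an odd affine tetrahedron with normal $n$. Use the base point $x^0\in T$ for both $L(T)$ and $L(D)$. Then $L(T)\subseteq L(D)$, hence $\Rel(D)\subseteq\Rel(T)$. Since $L(T)$ is three-dimensional, $\Rel(T)=\{0,n\}$, and $n$ has odd weight. So $\Rel(D)$ contains no nonzero even vector, and $m_{K_4}(D)>2$.

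For the ``only if'' direction, first bound the dimension of $\Rel(D)$. Any two distinct nonzero odd vectors XOR to a nonzero even vector. If $\dim\Rel(D)\ge2$, then $\Rel(D)$ contains either a nonzero even vector or two distinct odd vectors, whose XOR is even. Hence $\dim\Rel(D)\le1$, its nonzero element (if any) is odd, and $\dim L(D)\ge3$. Fix $x^0\in D$ and split into two cases.

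\emph{Case $\dim L(D)=3$.} Choose three points of $D$ whose differences from $x^0$ form a basis of $L(D)$. Their unique normal is the odd generator of $\Rel(D)$, so these four points form an odd affine tetrahedron.

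\emph{Case $\dim L(D)=4$.} Choose $x^1,\dots,x^4\in D$ with differences $d_i=x^i\xor x^0$ forming a basis of $\F_2^4$. For each $i$, the span of the other three $d_j$ has a normal $d_i^*$, namely the $i$th dual basis vector. The four vectors $d_1^*,\dots,d_4^*$ are linearly independent, while $E_{\rm even}$ has dimension $3$, so some $d_i^*$ is odd. Omitting that $x^i$ gives an odd affine tetrahedron inside $D$.

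This choice of triple is the only nonroutine step, and the dual-basis counting argument handles it.

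For the minimality statements, note that an odd affine tetrahedron $T$ itself has $m>2$ by the ``if'' direction. Any proper nonempty subset of $T$ has $\dim L\le2$, hence $\dim\Rel\ge2$. By the dimension argument above, such a $\Rel$ contains a nonzero even vector, so the subset has $m\le2$. Thus odd affine tetrahedra are inclusion-minimal. Conversely, a minimal domain with $m>2$ contains an odd affine tetrahedron, which itself has $m>2$, so by minimality it equals that tetrahedron and has four points.
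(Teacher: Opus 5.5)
Your proof is correct and shares the paper's overall structure. Both arguments reduce via Theorem~\ref{thm:k4-onebit} to the condition $\Rel(D)\cap E_{\rm even}=\{0\}$, extract four points of $D$, and obtain minimality because proper subsets have $\dim L\le 2$.

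The real difference is how the three points $x^1,x^2,x^3$ are chosen. The paper works directly from the hypothesis. The evaluation map $a\mapsto(a\cdot(x\xor x^0))_{x\in D\setminus\{x^0\}}$ is injective on the three-dimensional space $E_{\rm even}$, so some three coordinates already retain rank three. For the corresponding points, $U^\perp\cap E_{\rm even}=\{0\}$ forces $\dim U=3$ and an odd normal, with no case distinction.

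You instead first prove a structural fact: in a one-bit-negative domain, $\Rel(D)$ is either $\{0\}$ or $\{0,n\}$ with $n$ odd, since two distinct odd relations would XOR to an even one. You then split on $\dim L(D)$. In the full-rank case you use the dual basis: four independent normals cannot all lie in the three-dimensional $E_{\rm even}$.

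Your route is a little longer but more explicit. It says exactly which point to drop, and your parity observation also serves as the minimality argument. The paper's rank-selection argument is more uniform and needs neither the case split nor the dual basis.

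For the ``if'' direction you use $\Rel(D)\subseteq\Rel(T)$, where the paper invokes heredity (Lemma~\ref{lem:heredity}); both steps are immediate.
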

\begin{proof}
Theorem~\ref{thm:k4-onebit} says that negativity is equivalent to
\begin{equation}
 L(D)^\perp\cap E_{\rm even}=\{0\}.
 \label{eq:negative-even}
\end{equation}
Fix $x^0\in D$.  The evaluation map
$a\mapsto(a\cdot(x\xor x^0))_{x\in D\setminus\{x^0\}}$ is therefore injective on
$E_{\rm even}$.  Three of its coordinates retain rank three; call the
corresponding points $x^1,x^2,x^3$.  Their differences span a
three-dimensional space $U$, since otherwise
$\dim(E_{\rm even}\cap U^\perp)\ge1$.  The unique nonzero normal to $U$ is thus
odd, so $\{x^0,x^1,x^2,x^3\}$ is an odd affine tetrahedron.

Conversely, such a tetrahedron has a one-dimensional relation space generated
by an odd vector, hence no nonzero even relation and is negative by
Theorem~\ref{thm:k4-onebit}.  Heredity handles every domain containing it.
Every proper subset has $\dim L\le2$, so
$\dim(L^\perp\cap E_{\rm even})\ge1$ and is one-bit-positive, proving
minimality.
\end{proof}

\subsection{Full six-orbit enumeration}\label{app:orbit-enumeration}

There are $896$ odd affine tetrahedra.  Indeed, there are eight odd nonzero
normals, two affine hyperplanes per normal, and each affine copy of $\F_2^3$
contains
\[
  \frac{8\,|\mathrm{GL}(3,2)|}{4!}
  =\frac{8\cdot168}{24}=56
\]
unordered affine bases.  Exact orbit closure under $\cG$ partitions these
$896$ sets into the six classes in Table~\ref{tab:orbits}.  The \emph{distance
spectrum} of a four-point domain is the multiset of its six pairwise Hamming
distances; for example, $\{1^3,2^3\}$ denotes three pairs at distance one and
three pairs at distance two.  The six spectra are distinct, so the rows cannot
merge; the orbit sizes sum to $896$.
This orbit partition is a finite computer enumeration, independently reproduced
by the audit code described in Appendix~\ref{app:computation}.

\FloatBarrier
% A bottom two-column float must be shipped out before subsection C starts.
\clearpage
\subsection{The Unique Support-Three Extension of \texorpdfstring{$T_6$}{T6}}\label{app:extension}

We now use the six classes in Table~\ref{tab:orbits} to test extensions of
$T_6=\{0000,0111,1001,1010\}$.  Table~\ref{tab:extension} checks every cube
point outside $T_6$.  For each point $p\ne1100$ outside $T_6$, it specifies one point
$x^\ast\in T_6$ such that $(T_6\setminus\{x^\ast\})\cup\{p\}$ is a forbidden minimal
obstruction.  The orbit is identified by its Hamming-distance spectrum, which
is a complete invariant among the six rows of Table~\ref{tab:orbits}.
The exclusion of three-state extensions also uses the support-four lower
bounds for $\cO_1$--$\cO_5$ in Theorem~\ref{thm:orbit-values}, proved in
Appendices~\ref{app:three-state}--\ref{app:computation}; orbit membership
alone does not supply those lower bounds.  The admissible point $1100$ is
certified by the $C_5$ construction in Theorem~\ref{thm:C5-scheme}.

\begin{table}[ht]
\caption{Extension check for $T_6$.  The middle column gives the point of
$T_6$ omitted from the forbidden four-set.}
\label{tab:extension}
\centering
\scriptsize
\begin{tabular}{c c c}
\toprule
added $p$ & omitted point & orbit of $(T_6\setminus\{x^\ast\})\cup\{p\}$\\
\midrule
$0001$ & $1001$ & $\cO_5$\\
$0010$ & $1010$ & $\cO_5$\\
$0011$ & $1010$ & $\cO_4$\\
$0100$ & $1010$ & $\cO_5$\\
$0101$ & $1010$ & $\cO_4$\\
$0110$ & $1001$ & $\cO_4$\\
$1000$ & $0111$ & $\cO_1$\\
$1011$ & $1010$ & $\cO_5$\\
$1100$ & -- & $m=3$\\
$1101$ & $1010$ & $\cO_5$\\
$1110$ & $1001$ & $\cO_5$\\
$1111$ & $0000$ & $\cO_4$\\
\bottomrule
\end{tabular}
\end{table}

For the support-three extension $C_5=T_6\cup\{1100\}$, deleting $0000$, $1001$,
$1010$, or $1100$ leaves an $\cO_6$ tetrahedron; deleting $0111$ leaves the
one-bit-positive four-set whose six pairwise distances are all two.  Thus the
extension table is consistent with the two support-three orbits in
Theorem~\ref{thm:full-k4}.

\FloatBarrier
\section{Three-State Reductions and Minimal-Orbit Values}\label{app:three-state}

A three-state source cannot be treated as three uniform labels: privacy depends
on which subset sums of its positive weights coincide.  We first reduce
the possible weight vectors, then classify the six obstruction orbits.

\subsection{Source-weight reduction}

Let $(p_1,p_2,p_3)$ be the three positive source-state probabilities, with
$p_1+p_2+p_3=1$, and write
$\cU=(1/3,1/3,1/3)$ and $\cQ=(1/2,1/4,1/4)$, up to permutation.
For a local view map $R\to\mathcal A$, a \emph{source block} is a nonempty
fiber: the source states producing one fixed view value.  Its mass is the sum
of their probabilities.
Every proper nonempty source block has mass $p_i$ or $1-p_i$; two different
blocks can have equal mass only if $p_i=p_j$ or $p_i=1/2$.

\begin{proposition}[Three-state collapse outside $\cU,\cQ$]\label{prop:weight-collapse}
For every graph and domain, a scheme with three positive source states whose
weight vector is neither $\cU$ nor $\cQ$ implies a scheme for the same instance
with source support at most two.  Consequently, if $m_G(D)>2$ but
$m_G(D)\le3$, every minimum-support realization has weights $\cU$ or $\cQ$.
\end{proposition}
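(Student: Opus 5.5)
The plan is to discard one source state and renormalize the remaining two to $1/2,1/2$. Correctness survives any restriction of the source support, since \eqref{eq:correctness} is a pointwise condition over supported states. So the whole task is privacy. Write the weights as $(p_1,p_2,p_3)$. Fix a vertex $v$ and inputs $x,y\in D$ with $x_v=y_v$. The view maps $\View_v(x,\cdot)$ and $\View_v(y,\cdot)$ partition $R=\{1,2,3\}$ into source blocks. Equality in distribution \eqref{eq:privacy} gives a bijection between the blocks of $x$ and those of $y$ that preserves both view value and mass. I will show that, outside $\cU$ and $\cQ$, this matching can always be realized by a permutation $\pi$ of $R$ with
\[
\View_v(y,r)=\View_v(x,\pi(r))\quad(r\in R),\qquad \pi(\{1,2\})=\{1,2\}
\]
after relabelling the states once, uniformly for all $v,x,y$. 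Given such a $\pi$, the restriction to states $\{1,2\}$ with uniform weights yields the same multiset of views at $x$ and at $y$, so privacy holds. Deleting state $3$ then gives a scheme with support at most two. The consequence about minimum-support realizations is immediate.

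The key tool is the mass observation stated before the proposition. A proper nonempty block has mass $p_i$ (a singleton) or $1-p_i$ (a pair). Two such masses coincide only if $p_i=p_j$ for $i\ne j$, or if $p_i=1/2$; the alternative $p_i=1-p_j$ with $i\ne j$ would force the third weight to vanish. Weight vectors other than $\cU$ and $\cQ$ then fall into three cases.

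\emph{Generic case:} all weights are distinct and none equals $1/2$. All block masses are distinct, so the block partitions at $x$ and $y$ coincide block by block and $\pi=\mathrm{id}$. Any state may be deleted.

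\emph{Equal-pair case:} after relabelling, the weights are $(a,a,1-2a)$ with $a\ne1/3$; $a\ne1/4$ is also required, since $a=1/4$ is $\cQ$. I check the only possible coincidences. The mass $1-2a$ of $\{3\}$ differs from $a$ (this would be $\cU$) and from $2a$ (this would be $\cQ$). The mass $a$ of $\{1\}$ differs from $1-a$, since $a\ne1/2$. Hence any matching fixes the block containing state $3$ and can only exchange states $1$ and $2$. So $\pi$ fixes $3$, and we delete state $3$.

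\emph{Half case:} after relabelling, the weights are $(1/2,b,1/2-b)$ with $b\ne1/4$ (again $b=1/4$ is $\cQ$). Now $\{1\}$ and $\{2,3\}$ both have mass $1/2$, and the matching may swap them. All other blocks, namely $\{2\}$, $\{3\}$, $\{1,3\}$ and $\{1,2\}$, have pairwise distinct masses different from $1/2$. I would then go through the possible partitions at $x$. The partitions $R$, the three singletons, $\{2\},\{1,3\}$, and $\{3\},\{1,2\}$ all force $\pi=\mathrm{id}$. The partition $\{1\},\{2,3\}$ may be swapped at $y$: the state $1$ at $y$ takes the view of states $2,3$ at $x$, and vice versa. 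On the states $\{1,2\}$ this is exactly the transposition of $1$ and $2$. So in every subcase the restricted views on $\{1,2\}$ at $y$ are those at $x$ up to a permutation of $\{1,2\}$, and we delete state $3$.

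The main obstacle is this half case. There the matching of equal-mass blocks is not induced by a permutation of individual states in general. The argument must check that the specific two-state subset $\{1,2\}$ is mapped onto itself in every configuration, and that the chosen subset works simultaneously for all vertices and input pairs. It does, because the choice of $\{1,2\}$ depends only on the weights. I would write this case out as a short exhaustive list of the five partition types, pairing each with its admissible matchings.
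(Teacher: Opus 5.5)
Your proof is correct and follows the paper's own route: the same three-case split (all weights distinct with none equal to $1/2$, exactly two equal weights, one weight equal to $1/2$), the same mass-coincidence analysis, and in the last case the same restriction to $\{r_\star,\ell_1\}$ with uniform weights. Two small differences. In the first two cases the paper fixes a single state (in the equal-pair case, the uniquely weighted one) and gets a deterministic scheme, whereas you keep two states. Your opening claim that the matching is always induced by a permutation of all of $R$ fails for the root swap in the half case; it should be weakened to a permutation of $\{1,2\}$, which is exactly what your half-case argument uses.
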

\begin{proof}
Assume the weight vector is neither $\cU$ nor $\cQ$.
If the weights are distinct and none is $1/2$, different nonempty subsets
of $R$ have different masses.  Thus, at inputs with the same local bit,
privacy forces the blocks producing each view value to contain exactly the
same source states.  Fixing any one source state therefore gives identical
views in every privacy comparison, hence a deterministic private scheme;
correctness survives restriction of the source support.

If exactly two weights agree, then outside $\cU,\cQ$ equal-mass blocks either
both contain the uniquely weighted state or both omit it.  For inputs with
the same local bit, the view value produced by that state is therefore the
same.  Fix that state to obtain a deterministic scheme as above.

In the remaining case, write $R=\{r_\star,\ell_1,\ell_2\}$, where $r_\star$
has weight $1/2$ and $\ell_1,\ell_2$ have distinct positive weights.
The only distinct nonempty source blocks of equal mass are
$\{r_\star\}$ and $\{\ell_1,\ell_2\}$.
Fix a vertex and two inputs with the same local bit.
Outside the partition $\{\{r_\star\},\{\ell_1,\ell_2\}\}$, privacy matches
the view values state by state: blocks with unique masses determine their
source subsets, and these also determine any remaining block.
In particular, for the partition into three singleton blocks, the two
light-state blocks determine the remaining heavy-state block.
In the exceptional partition, privacy may only preserve or exchange the
two mass-$1/2$ view values.
Restrict the source to $R'=\{r_\star,\ell_1\}$ and give both states weight
$1/2$.  In every privacy comparison the two views are unchanged or exchanged,
so their uniform distribution is unchanged, even if the values coincide.
Privacy is therefore preserved, and correctness survives restriction of the
source support.
\end{proof}

\subsection{An exact local-view normal form}

At each input, assign a complete-view label to each source state at each
vertex.  Privacy requires that, for inputs with the same local bit, every
label has the same total probability.  Equal labels at either endpoint force
equal shares on their common edge; taking the equivalence closure of these
requirements records all forced edge identifications.
The final test checks whether the resulting incident edge labels still
distinguish the chosen complete-view labels.

Let $R=\{1,2,3\}$ have a fixed positive weight vector $p$.  Set
$\mathcal X=D\times R$ and $S_v=\{x_v:x\in D\}$.  For each $v$ and
$b\in S_v$, consider disjoint nonempty finite label sets $\mathcal A_{v,b}$ and
surjections $\kappa_{v,x}:R\twoheadrightarrow \mathcal A_{v,x_v}$.  Define
$M_{v,x}(a)=\sum_{r:\kappa_{v,x}(r)=a}p_r$.  For $e=uv$, let $\equiv_e$
be the equivalence relation on $\mathcal X$ generated by
\[
 (x,r)\sim_e(y,s)
 \quad\text{if}\quad
 \kappa_{u,x}(r)=\kappa_{u,y}(s)
 \ \text{or}\ 
 \kappa_{v,x}(r)=\kappa_{v,y}(s),
\]
and set $\sigma_v(x,r)=([(x,r)]_{\equiv_e}:e\ni v)$.

\begin{theorem}[Local-view/forced-edge normal form]\label{thm:three-normal}
A three-state scheme with weights $p$ exists if and only if the preceding data
can be chosen so that
\[
 \begin{gathered}
 M_{v,x}(a)=M_{v,y}(a)
 \quad\text{if }x_v=y_v,\ a\in \mathcal A_{v,x_v},\\
 \sigma_v(x,r)=\Sigma_v\bigl(\kappa_{v,x}(r)\bigr)
 \quad\text{for }(x,r)\in\mathcal X,
 \end{gathered}
\]
where each map
$\Sigma_v:\bigsqcup_{b\in S_v}\mathcal A_{v,b}\to
\prod_{e\ni v}(\mathcal X/{\equiv_e})$ is injective.  In that case canonical
edge shares are $W_e(x,r)=[(x,r)]_{\equiv_e}$.
\end{theorem}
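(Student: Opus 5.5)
The proof splits into two directions. Given a scheme, we read off the label data; given accepted data, we build the canonical shares and check the scheme conditions.

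\emph{Necessity.} Suppose $(R,p)$ with share maps $W_e$ is a correct private scheme. Take $\mathcal A_{v,b}$ to be the set of complete views $\View_v(x,r)$ occurring with $x_v=b$, and set $\kappa_{v,x}(r)=\View_v(x,r)$.

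Correctness makes the sets $\mathcal A_{v,0}$ and $\mathcal A_{v,1}$ disjoint. Each $\kappa_{v,x}$ is onto $\mathcal A_{v,x_v}$: privacy gives every view in $\mathcal A_{v,b}$ positive probability at every input with local bit $b$. Privacy is then exactly the mass equality $M_{v,x}(a)=M_{v,y}(a)$.

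Next, compare $\equiv_e$ with the kernel of $W_e$. If two points have equal labels at an endpoint of $e$, their complete views there agree, so in particular their $e$-coordinates agree. Hence each generator of $\equiv_e$ lies in the kernel of $W_e$, and $\equiv_e$ refines it. So $W_e$ factors as $W_e=\phi_e\circ[\cdot]_{\equiv_e}$.

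Now define $\Sigma_v(a)=\sigma_v(x,r)$ for any $(x,r)$ with $\kappa_{v,x}(r)=a$.
\begin{itemize}
\item $\Sigma_v$ is well defined. If $\kappa_{v,x}(r)=\kappa_{v,y}(s)$, this is a generator at endpoint $v$ for every edge $e\ni v$, so the two points have the same class in every coordinate.
\item $\Sigma_v$ is injective. If $\sigma_v(x,r)=\sigma_v(y,s)$, applying the maps $\phi_e$ gives equal views $\View_v(x,r)=\View_v(y,s)$, which are the labels themselves.
\end{itemize}

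\emph{Sufficiency.} Given accepted data, set $W_e(x,r)=[(x,r)]_{\equiv_e}$. Then $\View_v(x,r)=\sigma_v(x,r)=\Sigma_v(\kappa_{v,x}(r))$.

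\begin{itemize}
\item Correctness. Injectivity of $\Sigma_v$ turns equal views into equal labels. The label sets $\mathcal A_{v,0}$ and $\mathcal A_{v,1}$ are disjoint, so equal labels force $x_v=y_v$.
\item Privacy. The view distribution at $x$ is the pushforward under $\Sigma_v$ of the label distribution $M_{v,x}$. The mass equalities then give equal view distributions whenever $x_v=y_v$.
\end{itemize}
The source is the original three-state $(R,p)$ and the alphabets $\mathcal X/{\equiv_e}$ are finite, so this is a three-state scheme with weights $p$.

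\emph{Main obstacle.} The delicate point is the necessity direction's well-definedness of $\Sigma_v$. It relies on labels being complete views, so that label equality at $v$ propagates to every incident edge simultaneously. Once that observation is in place, the remainder is routine bookkeeping.
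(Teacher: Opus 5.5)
Your proof is correct and follows the paper's argument essentially step for step: complete views serve as labels, $\equiv_e$ refines the kernel of each original share so that $\Sigma_v$ exists and is injective, and the canonical shares $[(x,r)]_{\equiv_e}$ give the converse. One small correction to your closing remark: well-definedness of $\Sigma_v$ holds for arbitrary data directly from the generators of $\equiv_e$, as your own bullet shows, whereas the fact that labels are complete views is what you actually need for $\equiv_e$ to refine the kernel of $W_e$, and hence for injectivity.
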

\begin{proof}
Given a scheme, take $\mathcal A_{v,b}$ to be the support of the view distribution at
local bit $b$ and set $\kappa_{v,x}(r)=\View_v(x,r)$.  Privacy gives
the equality of local masses, while correctness makes the two local-bit label
sets disjoint.  Equal $\kappa_v$-labels generate $\equiv_e$ on every incident
edge, so $\sigma_v$ factors through the local label.  The original edge share
is constant on every $\equiv_e$-class; hence equality of two resulting
signatures would make the corresponding original views equal, and the factor
$\Sigma_v$ is injective.

Conversely, use the stated canonical shares.  Their incident view at $v$ is
$\Sigma_v(\kappa_{v,x}(r))$; injectivity makes it equivalent to the local
label.  The mass equality therefore gives privacy, and disjoint label sets for
the two bits give correctness.
\end{proof}

Fix an ordering of $D$.  A \emph{normalized local configuration} at $v$
consists of the label sets $\mathcal A_{v,b}$ and the maps
$\kappa_{v,x}$ for every $x\in D$, with the labels fixed canonically at
the first input of each occurring local-bit class.  The maps must be onto
and satisfy the local mass equalities of Theorem~\ref{thm:three-normal}.
For a three-point source there are only five set partitions.  After choosing
the partition at the first input of a local-bit class, the map at another
input assigns its source blocks to the fixed labels with matching masses;
we call such an assignment a \emph{block-mass-preserving transporter}.
Thus Theorem~\ref{thm:three-normal} gives a finite exact enumeration,
without assuming linearity, a binary share alphabet, or a prescribed
edge partition.

In Table~\ref{tab:search}, a tuple such as $(33,85,85,81)$ lists, in vertex
order $1,2,3,4$, the numbers of normalized local configurations considered
independently at the four vertices, before checking their joint edge
compatibility.

\begin{table}[!t]
\caption{Exact support-three search after the general weight collapse.
Counts give the numbers of normalized local configurations at vertices
$1,2,3,4$ before joint edge-compatibility testing.
``UNSAT'' means that the exhaustive search found no feasible normalized
configuration; ``SAT'' means that it found one.  An UNSAT row is a
computer-assisted theorem conditional only on the proved normal form and
correct execution.}
\label{tab:search}
\centering
\footnotesize
\setlength{\tabcolsep}{3pt}
\begin{tabular}{c c c c c}
\toprule
Orbit & $\cQ$ counts & $\cQ$ result & $\cU$ counts & $\cU$ result\\
\midrule
$\cO_1$ & $33,85,85,85$ & UNSAT & $298,320,320,320$ & UNSAT\\
$\cO_2$ & $33,85,85,81$ & UNSAT & $298,320,320,256$ & UNSAT\\
$\cO_3$ & $33,85,81,81$ & UNSAT & $298,320,256,256$ & UNSAT\\
$\cO_4$ & $85,85,85,81$ & UNSAT & $320,320,320,256$ & UNSAT\\
$\cO_5$ & $81,85,85,81$ & UNSAT & $256,320,320,256$ & UNSAT\\
$\cO_6$ & $81,85,81,81$ & UNSAT & $256,320,256,256$ & SAT\\
\bottomrule
\end{tabular}
\end{table}

\subsection{Minimal-orbit lower bounds}

\begin{theorem}[Support values on minimal obstructions]\label{thm:orbit-values}
For every domain $D$ in the six orbits of Table~\ref{tab:orbits},
\begin{equation}
 m_{K_4}(D)=
 \begin{cases}
 4, & D\in\displaystyle\bigcup_{i=1}^{5}\cO_i,\\
 3, & D\in\cO_6.
 \end{cases}
 \label{eq:orbit-values}
\end{equation}
\end{theorem}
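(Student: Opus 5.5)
The plan is to prove the upper and lower bounds separately, using that $m_{K_4}$ is invariant under $\cG$, so it suffices to treat the six representatives of Table~\ref{tab:orbits}.

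\emph{Upper bounds.} Proposition~\ref{prop:k4-upper} gives $m_{K_4}(D)\le4$ for every domain. For $\cO_6$, the representative $T_6$ lies in $C_5$, so Theorem~\ref{thm:C5-scheme} gives $m\le3$. \emph{Lower bound $m>2$.} Theorem~\ref{thm:odd-tetra} shows that every odd affine tetrahedron is one-bit-negative, and all six representatives are odd affine tetrahedra. This settles $\cO_6$ and reduces $\cO_1$--$\cO_5$ to showing that no three-state scheme exists.

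\emph{Excluding three states for $\cO_1$--$\cO_5$.} Suppose some domain $D$ in one of these orbits had $m_{K_4}(D)=3$. Since $m>2$, Proposition~\ref{prop:weight-collapse} forces the weight vector of any minimum-support realization to be $\cU$ or $\cQ$. Theorem~\ref{thm:three-normal} then reduces existence to a finite question: do there exist normalized local configurations at the four vertices whose forced edge identifications keep every $\Sigma_v$ injective? Local configurations can be enumerated by fixing a set partition of $R$ at the first input of each local-bit class and then applying block-mass-preserving transporters at the remaining inputs. Joint compatibility is tested by computing the equivalence closures $\equiv_e$ with union--find. The search runs over the products of the counts listed in Table~\ref{tab:search}, and I would prune it by processing vertices in order and checking injectivity as soon as an edge's classes can no longer merge further. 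This search reports UNSAT in both weight cases for $\cO_1$--$\cO_5$, which gives $m\ge4$ for those orbits.

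\emph{Direct argument for $\cO_1$.} I would also give a computation-free proof for $\cO_1=\{0000,0001,0010,0100\}$. Here vertex $1$ is inactive, and vertices $2,3,4$ each take bit $1$ on exactly one input. The idea is to show that the views of vertices $2,3,4$ restricted to edges among $\{2,3,4\}$, together with the edges to vertex $1$, would require the source to support two independent equal-mass splits. That is impossible with three states under $\cU$ or $\cQ$ after the collapse: with $\cQ$, the only mass-$1/2$ split is $\{r_\star\}$ versus the two light states.

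\emph{Main obstacle.} The real difficulty is the exhaustive search. Its completeness depends entirely on the normal form being exact, and the feasible-space sizes are large enough that the search, not a hand argument, carries the result. The $\cO_1$ hand argument is also delicate, and I would be prepared to fall back to the search for that orbit as well.
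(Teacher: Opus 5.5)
Your proposal is correct and follows the paper's proof. $\cG$-invariance reduces to the representatives, Proposition~\ref{prop:k4-upper} and Theorem~\ref{thm:C5-scheme} give the upper bounds, Theorem~\ref{thm:odd-tetra} gives $m>2$, and Proposition~\ref{prop:weight-collapse} with Theorem~\ref{thm:three-normal} and the UNSAT rows of Table~\ref{tab:search} exclude three states.

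The only divergence is $\cO_1$. The paper proves it without computation, using Lemma~\ref{lem:o1-helper-delete} (constant helper spokes would give a three-state Family~11 scheme) together with block-count inequalities from single-coordinate flips. Your ``two equal-mass splits'' sketch is not yet an argument: under $\cU$ privacy acts by source permutations rather than mass-$1/2$ splits, and it does not handle helper spokes carrying correlated randomness. Since Table~\ref{tab:search} also reports UNSAT for $\cO_1$, your fallback to the search already covers it.
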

\begin{proof}
Each representative is an odd affine tetrahedron, so
Theorem~\ref{thm:odd-tetra} gives $m>2$.
For the representative of $\cO_1$, Appendix~\ref{app:o1} excludes three states
by a direct argument.  For the representatives of $\cO_2,\ldots,\cO_5$,
Proposition~\ref{prop:weight-collapse} leaves only the source distributions $\cU,\cQ$.
The exhaustive searches in Table~\ref{tab:search} find no accepting branch
under either distribution.  Theorem~\ref{thm:three-normal} makes this exclusion exact:
every scheme induces an enumerated configuration, and every accepted
configuration yields a scheme.  The search instances are described below;
enumeration, soundness, completeness, and execution-audit details are in
Appendix~\ref{app:computation}.  These five representatives therefore require
at least four states.  Proposition~\ref{prop:k4-upper} gives the matching
upper bound.

For the representative $T_6$ of $\cO_6$, restrict the three-state construction
of Theorem~\ref{thm:C5-scheme} from $C_5$ to $T_6$.  Together with $m>2$,
this gives $m_{K_4}(T_6)=3$.  Finally, independent bit negations and coordinate
permutations preserve $m_{K_4}$, so the values hold for every domain in the
respective orbits.
\end{proof}

\subsection{Search instances and results}

For $\cO_1$, Appendix~\ref{app:o1} excludes both remaining source-weight cases by local
partition arguments and the lower bound for Family~11 in the 3SS
classification of Anilkumar et al.~\cite{Anilkumar2023};
Proposition~\ref{prop:k4-upper} supplies the matching upper bound.

For $\cO_2,\ldots,\cO_5$, Proposition~\ref{prop:weight-collapse} leaves only
$\cU,\cQ$, and Theorem~\ref{thm:three-normal} is the exact feasibility test.
For each vertex and local-bit class, normalize the first local map and enumerate
the block-mass-preserving transporters at the other inputs.  Each pair of
endpoint configurations generates the forced closure $\equiv_e$; a tuple is
feasible exactly when its three incident closure signatures separate every pair
of distinct local labels at each vertex.  Table~\ref{tab:search} gives the
vertexwise numbers of normalized configurations and the result in each source-weight case;
Appendix~\ref{app:computation} derives the counts and specifies the exhaustive
implementation.

\FloatBarrier
\section{A Direct Support-Four Lower Bound for \texorpdfstring{$\cO_1$}{O1}}\label{app:o1}

Let
\begin{equation}
 D_1=\{0000,0001,0010,0100\}.
 \label{eq:D1}
\end{equation}
Write the constant coordinate as $h=1$ and the active coordinates as
$a=2,b=3,c=4$.  We call the three edges $ha,hb,hc$ the \emph{helper spokes}.
Deleting $h$ projects to the domain labelled Family~11 in the 3SS
classification of Anilkumar et al.~\cite{Anilkumar2023}:
\begin{equation}
 D_{\mathrm{F11}}=\{000,001,010,100\}.
 \label{eq:F11}
\end{equation}
Its minimum source support is six~\cite{Anilkumar2023}.  This does not
immediately give a lower bound on $K_4$: a constant secret at $h$ does
not make its incident shares constant.  Those shares can supply
correlated randomness needed for reconstruction at the active vertices,
so simply deleting them can destroy correctness.  For this orbit, local
partition constraints let us rule out that possibility under a
three-state source and obtain a direct proof using the Family~11 bound.
Theorem~\ref{thm:k4-onebit} already gives $m_{K_4}(D_1)>2$.  We prove that support three
is impossible without using an exhaustive search.

By Proposition~\ref{prop:weight-collapse}, it remains to exclude only the
uniform source-weight case $\cU$ and the source-weight case $\cQ=(1/2,1/4,1/4)$.
A single-coordinate flip must preserve the other vertices' view distributions
while making the changed vertex's view supports disjoint.  We translate these
requirements into edge constraints and sum them to obtain a contradiction.

\begin{lemma}[Deleting a constant helper]\label{lem:o1-helper-delete}
In a putative three-state scheme for $D_1$, the three helper spokes cannot all
be constant at the base input.
\end{lemma}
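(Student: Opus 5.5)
The plan is to show that constancy at the base input spreads, through privacy at the inactive vertex $h$, to constancy on all of $D_1\times R$. After that, deleting $h$ leaves a three-state scheme on $K_3$ for Family~11, which contradicts the support-six bound of Anilkumar et al.~\cite{Anilkumar2023}. Suppose, for contradiction, that at the base input $0000$ each helper spoke $W_{ha},W_{hb},W_{hc}$ takes a single value as the source state $r$ varies over $R$.

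\emph{Step 1: propagation via privacy at $h$.} The coordinate $h=1$ is constant on $D_1$: every input has $x_h=0$. So \eqref{eq:privacy} requires $\View_h(x,\mathsf R)$ to have the same distribution for all four inputs $x\in D_1$. At $0000$ the view $(W_{ha},W_{hb},W_{hc})$ is, by assumption, a single triple $w^\ast$, so its distribution is the point mass at $w^\ast$. Since every source state has positive weight, equality of distributions forces $\View_h(x,r)=w^\ast$ for every $x\in D_1$ and every $r\in R$. Hence each helper spoke is one fixed constant on all of $D_1\times R$, not merely at the base input.

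\emph{Step 2: deleting $h$.} Restrict the scheme to the triangle on $a,b,c$, keeping $W_{ab},W_{ac},W_{bc}$ and the same three-state source. The inputs project bijectively onto $D_{\mathrm{F11}}=\{000,001,010,100\}$ in \eqref{eq:F11}. For each active vertex $v\in\{a,b,c\}$, its $K_4$ view is its triangle view with one extra coordinate that is the same constant at every input and source state. Two $K_4$ views at $v$ therefore coincide exactly when the corresponding triangle views coincide, and the two view distributions are pushforwards of one another under a fixed injective map. Consequently \eqref{eq:correctness} and \eqref{eq:privacy} transfer verbatim. This yields a 3SS scheme for Family~11 with source support at most three, contradicting the minimum support six from \cite{Anilkumar2023}.

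\emph{Main obstacle.} The only delicate point is Step 1: constancy must hold jointly in $r$ and across inputs, not just pointwise in $r$ at one input. This is exactly what the point-mass argument supplies, and it relies on full support of $\mu$ and on $h$ being inactive. One must also check that the definitions here match the $K_3$ model of \cite{Anilkumar2023}, including arbitrary weights and finite alphabets. This holds because, as noted in Section~\ref{sec:model}, the model restricted to $K_3$ is precisely their three-secret sharing setting.
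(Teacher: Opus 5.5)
Your proposal is correct and follows the paper's own proof: privacy at the inactive vertex $h$ and full support of the source turn base-input constancy into pointwise constancy of all three helper spokes. Deleting $h$ then transfers correctness and privacy to a $K_3$ scheme for Family~11 with at most three states, which contradicts the support-six bound.
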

\begin{proof}
Otherwise the helper view at the base input is a point mass.  Since $x_h$ is
constant on $D_1$, perfect privacy gives the same point mass at every input;
positivity of all source weights then makes every helper share pointwise
constant.  After deleting $h$, privacy survives marginalization.  A collision
of projected views across two local bits would remain a collision after
reinserting the fixed helper-spoke coordinate, contradicting the original
correctness condition.  We therefore obtain a scheme with at most three source
states for \eqref{eq:F11}, contrary to the Family~11 lower
bound~\cite{Anilkumar2023}.
\end{proof}

\subsection{Uniform source}

Let $R=\{1,2,3\}$ be uniform and fix the base input $0=0000$.
For each edge $e$, define an equivalence relation on the source states by
\[
 r\sim_{e,0}s\quad\Longleftrightarrow\quad W_e(0,r)=W_e(0,s),
 \qquad P_e:=R/{\sim_{e,0}}.
\]
Thus a block of $P_e$ consists of the states giving the same base-input edge
share.  Write $|P_e|$ for its number of blocks and set
\begin{equation}
 d_e:=|P_e|-1\in\{0,1,2\}.
 \label{eq:edge-block-count}
\end{equation}
The three cases are: $d_e=0$ for the one-block or \emph{constant partition}
$P_e=\{R\}$, equivalently $W_e(0,\cdot)$ is constant; $d_e=1$ for a
\emph{two-block partition}; and $d_e=2$ for the \emph{fine partition}
$P_e=\{\{1\},\{2\},\{3\}\}$.

For each active $i\in\{a,b,c\}$, the domain contains the input $e_i$ obtained
from the base input by flipping only coordinate $i$.

\begin{lemma}[Uniform local flip]\label{lem:o1-uniform-flip}
For every active flip $i$ and unchanged vertex $u\ne i$, privacy gives a
permutation $\pi_u\in S_3$ such that
\begin{equation}
 \View_u(e_i,\pi_u(r))=\View_u(0,r)\qquad(r\in R).
 \label{eq:o1-view-perm}
\end{equation}
Consequently, on every edge $uv$ opposite $i$,
\[
 W_{uv}(0,\pi_v^{-1}\pi_u(r))=W_{uv}(0,r)\qquad(r\in R).
\]
Thus $\pi_v^{-1}\pi_u$ preserves each block of $P_{uv}$ individually, and
\begin{equation}
 \sum_{e\ni i}d_e\ge3,\qquad
 \sum_{uv:\,u,v\ne i}d_{uv}\le2.
 \label{eq:o1-local-block-counts}
\end{equation}
More precisely, let the three unchanged vertices be $u_1,u_2,u_3$.
If
\[
 \pi_{u_1}=\pi_{u_2},\qquad
 \pi_{u_3}=\pi_{u_1}\circ\gamma
\]
for a three-cycle $\gamma$ of the source states, then $P_{iu_3}$ is the
fine partition, equivalently $d_{iu_3}=2$.
\end{lemma}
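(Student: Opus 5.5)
The plan is to translate everything into statements about permutations of the three equiprobable source states, and then read off each claim from correctness at the flipped vertex $i$.

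\emph{Existence of $\pi_u$ and the opposite-edge identity.} For $u\ne i$ the inputs $0$ and $e_i$ agree at $u$. Privacy \eqref{eq:privacy} with the uniform source says that the multisets $\{\View_u(0,r)\}_{r}$ and $\{\View_u(e_i,r)\}_r$ coincide. Any bijection matching equal values is the desired $\pi_u\in S_3$.

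Projecting \eqref{eq:o1-view-perm} to an edge $uv$ with $u,v\ne i$ gives two identities:
\[
W_{uv}(e_i,\pi_u(r))=W_{uv}(0,r),\qquad W_{uv}(e_i,\pi_v(s))=W_{uv}(0,s).
\]
Substituting $s=\pi_v^{-1}\pi_u(r)$ into the second and comparing with the first yields the displayed identity. Hence $\pi_v^{-1}\pi_u$ preserves every fibre of $W_{uv}(0,\cdot)$, that is, every block of $P_{uv}$.

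\emph{Reformulating correctness at $i$.} Write $u_1,u_2,u_3$ for the unchanged vertices and set $\rho_k:=\pi_{u_k}^{-1}\pi_{u_1}$, so $\rho_1=\mathrm{id}$. Projecting \eqref{eq:o1-view-perm} to the spoke $iu_k$ gives
\[
\View_i(e_i,\pi_{u_1}(s))=\bigl(W_{iu_k}(0,\rho_k(s))\bigr)_{k}.
\]
Correctness \eqref{eq:correctness} at the active vertex $i$ then says: for all $r,s\in R$ there is a $k$ with $r$ and $\rho_k(s)$ in different blocks of $P_{iu_k}$. Everything below uses only this condition.

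\emph{The two block-count inequalities.}
\begin{itemize}
\item For the first inequality, I will count ordered pairs. A spoke whose partition has block sizes $n_j$ separates $9-\sum_j n_j^2$ of the $9$ pairs $(r,t)$, independently of the permutation applied. This gives $0$ separated pairs for $d=0$, $4$ for $d=1$ and $6$ for $d=2$. All $9$ pairs must be separated, so a spoke total of at most $2$ is impossible: the options $(2,0,0)$ and $(1,1,0)$ give only $6$ and $8$.
\item For the second inequality, I will show that the opposite-triangle total being at least $3$ forces a collision at some diagonal pair $r=s$. It suffices to show that $\rho_2,\rho_3$ have a common fixed point. A fine opposite edge forces the relative permutation on it to be the identity. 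A two-block edge allows only the identity or the transposition inside its $2$-block. A total of at least $3$ means at least two nonconstant opposite edges, and I split into two cases.
\begin{itemize}
\item If some opposite edge is fine, the three $\pi$'s differ pairwise by at most one transposition. Then a common fixed point exists.
\item If all three opposite edges are two-block and $\rho_2,\rho_3$ were distinct transpositions, then $\pi_{u_3}^{-1}\pi_{u_2}$ would be a $3$-cycle. A $3$-cycle cannot preserve a $2$-block, which is a contradiction.
\end{itemize}
I expect this case split to be the main obstacle, since one has to be sure that every configuration of opposite totals $\ge3$ is covered.
\end{itemize}

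\emph{The precise statement.} Under the hypothesis $\pi_{u_1}=\pi_{u_2}$ and $\pi_{u_3}=\pi_{u_1}\gamma$, we get $\rho_2=\mathrm{id}$ and $\rho_3=\gamma^{-1}$. At the diagonal pairs $r=s$, only the spoke $iu_3$ can separate. So for every $s$, the states $s$ and $\gamma^{-1}(s)$ lie in different blocks of $P_{iu_3}$. The pairs $\{s,\gamma^{-1}s\}$ for a $3$-cycle exhaust all three $2$-subsets of $R$. Therefore no block of $P_{iu_3}$ has size at least two, and $P_{iu_3}$ is fine, that is, $d_{iu_3}=2$.
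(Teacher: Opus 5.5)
Your proposal is correct. The existence of $\pi_u$, the opposite-edge identity and the final fine-partition assertion are argued as in the paper. The difference is in how you obtain the two block-count inequalities.

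\emph{Spoke bound.} The paper argues through supports. With at most one nonconstant spoke, the marginal supports coincide. With two two-block spokes, it splits into distinct partitions (supports of sizes at least three and two inside a four-point product) and equal partitions, which need two further subcases. You instead count: each spoke separates $9-\sum_j n_j^2\in\{0,4,6\}$ of the nine pairs $(r,s)$, independently of $\rho_k$. A spoke total of at most two therefore separates at most eight pairs. This is a union bound with no case distinctions.

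\emph{Triangle bound.} The paper classifies $(\pi_{u_1},\pi_{u_2},\pi_{u_3})$ up to common left multiplication, conjugation and relabeling into five types. It discards the two types with a common fixed point and reads off the opposite-triangle maxima $2,2,0$ for the rest. You argue contrapositively that a triangle total of at least three forces $\rho_2,\rho_3$ to share a fixed point, which gives a diagonal collision. This uses only three facts:
\begin{itemize}
\item fine edges force equal permutations;
\item two-block edges allow only the identity or the transposition of the doubleton;
\item two distinct transpositions multiply to a three-cycle, which preserves no doubleton.
\end{itemize}
In your second case, add one line: if $\rho_2$ or $\rho_3$ is trivial, or $\rho_2=\rho_3$, a common fixed point is immediate.

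\emph{Comparison.} Your route is shorter and makes exhaustiveness transparent. The paper's enumeration gives an explicit list of the relative-permutation patterns that survive the common-fixed-point test. That list is the picture used right after the lemma in the $\Delta_h=1$ case, though your common-fixed-point criterion supplies the same information there.
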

\begin{proof}
Because the source is uniform, equality of the view distributions at $0$ and
$e_i$ means equality of the corresponding multisets of three view values.
A permutation $\pi_u\in S_3$ therefore matches their occurrences, giving
\eqref{eq:o1-view-perm}.  Comparing the two endpoint expressions for
$W_{uv}(e_i,\pi_u(r))$ gives the displayed blockwise invariance.
To examine correctness at the changed vertex $i$, write its base-input and
changed-input views as
\[
\begin{aligned}
 F_i(s)&=(W_{iu}(0,s):u\ne i),\\
 G_i(t)&=(W_{iu}(0,\pi_u^{-1}(t)):u\ne i).
\end{aligned}
\]
The privacy matching at each unchanged $u$ gives
$W_{iu}(e_i,t)=W_{iu}(0,\pi_u^{-1}(t))$, so $G_i(t)$ is exactly the
view at $i$ after the flip, expressed using base-input shares.
Thus $F_i(R)$ and $G_i(R)$ are the supports before and after the flip.
They must be disjoint because the local secret at $i$ changed.

If the sum of the incident $d_e$ values were at most two, either at most one coordinate would
be nonconstant, when all other coordinates are the same constants and its
marginal support makes $F_i(R)$ and $G_i(R)$ meet, or there would be two
coordinates with two-block partitions.  For distinct two-block partitions,
the base and changed joint supports have sizes at least three and two in a
four-point product and must meet.  For equal partitions, either the two
transported singleton states coincide, giving the same diagonal support, or
the third state lies in both doubleton blocks.  This proves the first bound.

Up to common left multiplication, conjugation, and permutation of the
unchanged vertices, the permutation triple is
\[
 (1,1,1),\ (1,1,\tau),\ (1,1,\gamma),\
 (1,\tau_1,\tau_2),\ (1,\gamma,\gamma^{-1}),
\]
where $\tau_1\ne\tau_2$ are transpositions and $\gamma$ is a three-cycle.
The first two types make $F_i$ and $G_i$ meet.  In the remaining types, the
relative permutations around the opposite triangle are, respectively,
identity plus two three-cycles, two transpositions plus one three-cycle, or
three three-cycles.  Invariance of each block under a three-cycle forces the
partition to be constant; under a transposition it forces the two exchanged
states into the same block, so the corresponding $d_e$ is at most one.  The sums of $d_e$ over the opposite triangles
are therefore at most $2,2,0$.  For the final assertion, take any base state $s$ and set
$t=\pi_{u_1}(s)$.  The coordinates toward $u_1,u_2$ agree in $F_i(s)$
and $G_i(t)$; toward $u_3$, the changed coordinate is
$W_{iu_3}(0,\gamma^{-1}(s))$.  If $s$ and $\gamma^{-1}(s)$ belonged
to the same block of $P_{iu_3}$, the two complete views would coincide.
A three-cycle visits all three states, and every partition with fewer than
three blocks puts some such pair in one block.  Hence $P_{iu_3}$ must be
fine, proving $d_{iu_3}=2$.
\end{proof}

Let $\Delta_h=d_{ha}+d_{hb}+d_{hc}$ and $\Delta_\triangle=d_{ab}+d_{ac}+d_{bc}$.
Summing the two bounds in
\eqref{eq:o1-local-block-counts} over the active flips gives
\begin{equation}
 \Delta_h+2\Delta_\triangle\ge9,\qquad 2\Delta_h+\Delta_\triangle\le6,
 \label{eq:o1-global-block-counts}
\end{equation}
and hence $\Delta_h\le1$.  The case $\Delta_h=0$ contradicts
Lemma~\ref{lem:o1-helper-delete}.  If $\Delta_h=1$, relabel so that
$d_{hc}=1$ and $d_{ha}=d_{hb}=0$.  The opposite bounds for $a,b$ give
$d_{bc},d_{ac}\le1$; the incident bounds at $c$ and $a$ then force
$d_{ac}=d_{bc}=1$ and $d_{ab}=2$.  For the flip at $c$, the opposite edges
$ha,hb$ are constant and $ab$ is fine, so $\pi_a=\pi_b$.  If the helper
permutation were equal to them or differed by a transposition, a common or fixed
source state would make the base and changed star supports meet.  It must differ
by a three-cycle, and Lemma~\ref{lem:o1-uniform-flip} then forces $d_{hc}=2$,
contrary to $d_{hc}=1$.  Thus the uniform source-weight case is impossible.

% Start balancing in the first column of the final page, after all floats.
\balance
\subsection{Weights \texorpdfstring{$(1/2,1/4,1/4)$}{(1/2,1/4,1/4)}}

Let $r_\star,\ell_1,\ell_2$ be the source states, with respective masses
$1/2,1/4,1/4$.  Call $\{\{r_\star\},\{\ell_1,\ell_2\}\}$ the
\emph{heavy-versus-light partition}.

\begin{lemma}[Heavy--light structure of a flip]\label{lem:o1-heavy-light}
For every active flip $i$:
\begin{enumerate}[label=(\roman*),leftmargin=*]
\item at least two edges incident to $i$ are nonconstant at the base input;
\item at least two edges in the triangle opposite $i$ are constant at the base
input.
\end{enumerate}
\end{lemma}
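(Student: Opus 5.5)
The plan is to follow the uniform-case template of Lemma~\ref{lem:o1-uniform-flip}, with one change: permutations of $R$ are replaced by the mass-preserving block matchings that privacy allows under $\cQ$. Fix an active flip $i$ and an unchanged vertex $u\ne i$. Since $x_u$ is the same at $0$ and $e_i$, privacy makes the view distributions $\View_u(0,\cdot)$ and $\View_u(e_i,\cdot)$ equal. By the block-mass observation used in Proposition~\ref{prop:weight-collapse}, the source blocks of the two view maps must then correspond with equal masses. Only two kinds of correspondence can occur.
\begin{enumerate}[label=(\alph*),leftmargin=*]
\item A \emph{pointwise transporter}: a mass-preserving permutation $\pi_u\in\{1,\tau\}$ with $\View_u(e_i,\pi_u(r))=\View_u(0,r)$, where $\tau$ swaps $\ell_1,\ell_2$.
\item A \emph{heavy--light exchange}: this is possible only when the view partition at $u$ refines no further than $\{\{r_\star\},\{\ell_1,\ell_2\}\}$. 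The value of $r_\star$ at one input is then the common value of $\ell_1,\ell_2$ at the other, and vice versa.
\end{enumerate}

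Part (i) is proved first, directly from marginals. Suppose at most one edge $iu_0$ incident to $i$ is nonconstant at the base input. For each unchanged $u$, privacy at $u$ gives that $W_{iu}(e_i,\mathsf R)$ has the same distribution as $W_{iu}(0,\mathsf R)$. The constant spokes therefore keep the same point masses, and positivity of the weights makes them pointwise the same constants at $e_i$. Hence the complete view at $i$ has the same support at $0$ and at $e_i$, namely the support of the single coordinate on $iu_0$ padded with those constants. These supports intersect, which contradicts correctness at $i$ because $x_i$ changed.

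For part (ii), I would translate the transporters into edge constraints, as in the uniform proof. On an opposite edge $uv$ with $u,v\ne i$, comparing the two endpoint descriptions of $W_{uv}(e_i,\cdot)$ gives two constraints. If both endpoints use pointwise transporters, $\pi_v^{-1}\pi_u$ preserves every block of the base partition $P_{uv}$. If exactly one endpoint uses an exchange, $P_{uv}$ must be invariant under swapping $r_\star$ with the light pair, which forces $P_{uv}$ to be constant or heavy-versus-light.

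I would then impose correctness at $i$ through the transported star $G_i(t)=(W_{iu}(0,\cdot)\text{ pulled back by the transporter at }u)$ and require $F_i(R)\cap G_i(R)=\emptyset$. First, if all three unchanged vertices use the same pointwise transporter, $F_i$ and $G_i$ coincide, so some relative transporter must be nontrivial. Second, case on the type triple at $(u_1,u_2,u_3)$, up to symmetry. The types are drawn from $\{1,\tau,\text{exchange}\}$, giving a handful of cases analogous to the list in Lemma~\ref{lem:o1-uniform-flip}. In each surviving case, the nontrivial relative transporters on the opposite triangle should force at least two of its edges to be invariant under a move that mixes $r_\star$ with a light state (relative $\tau$ only identifies $\ell_1,\ell_2$). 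That invariance should make those edges constant at the base input, giving (ii).

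The main obstacle is the exchange type. It is not a permutation of $R$: it acts only on a partition into two halves and may lose pointwise information, so the state-tracking argument from the uniform case does not transfer verbatim. My first attempt would be to show that whenever some $u$ uses an exchange, the spoke $iu$ is forced to be constant or heavy--light. I would then show that two exchanges, or an exchange next to a pointwise transporter, make every opposite edge between differently typed endpoints invariant under $r_\star\leftrightarrow\{\ell_1,\ell_2\}$ and also under the relative light swap. Together these invariances collapse the edge to a constant. If this does not close every case, the fallback is the finite check: enumerate the few transporter triples and base partitions directly, as in the $\cQ$ rows of Table~\ref{tab:search}, restricted to one flip.
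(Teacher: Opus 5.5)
Your part (i) is correct and is the paper's argument. Part (ii), however, is not proved, and the gap lies in the exchange type you flag as the obstacle. You are missing an organizing step. Let $T$ be the set of unchanged vertices whose matching is an exchange, and prove $1\le|T|\le2$.

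For the lower bound, excluding identical pointwise transporters is not enough. Use instead that every pointwise transporter lies in $\{1,\tau\}$ and therefore fixes $r_\star$. If $T=\varnothing$, then even for a mixed triple such as $(1,1,\tau)$, the changed star at $r_\star$ equals the base star at $r_\star$. This contradicts correctness at $i$. Your edge mechanism cannot dispose of such triples, since, as you note yourself, relative $\tau$ only identifies $\ell_1,\ell_2$.

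For the upper bound, the all-exchange triple must be ruled out separately, because it has no mixed edge at all. Every spoke $iu$ is then constant or heavy--light, which is the ingredient you mention. Exchanging at all three vertices therefore swaps the heavy and light star tuples, so $F_i(R)=G_i(R)$ and correctness fails.

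Second, for an opposite edge $uv$ with $u\in T$ and $v\notin T$, your stated constraint is only that $P_{uv}$ is constant or heavy-versus-light. That does not yield the count. Your proposed repair is invariance under $r_\star\leftrightarrow\{\ell_1,\ell_2\}$ together with the relative light swap. It fails at the level of partitions, because the heavy--light partition is invariant under both moves.

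What works is a value comparison at $r_\star$. The transporter at $v$ fixes $r_\star$, so $W_{uv}(e_i,r_\star)=W_{uv}(0,r_\star)$. The exchange at $u$ gives $W_{uv}(e_i,r_\star)=W_{uv}(0,\ell_1)=W_{uv}(0,\ell_2)$. Hence every edge crossing the cut $T\mid T^c$ is constant at the base input. Since $1\le|T|\le2$, this cut of the opposite triangle is nontrivial and has exactly two crossing edges, which is (ii).

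Without these steps, your argument rests on the finite-check fallback, which you have not carried out.
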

\begin{proof}
Inspecting the five partitions of $\{r_\star,\ell_1,\ell_2\}$ shows that a privacy matching
at an unchanged vertex is either induced by a source permutation fixing $r_\star$
(and possibly swapping the light states), or its full base-view partition is
$\{\{r_\star\},\{\ell_1,\ell_2\}\}$ and the matching exchanges the two mass-$1/2$
blocks; call this exchange a \emph{root swap}.  Indeed, fine matchings fix $r_\star$, a
two-block partition isolating one light state can only be transported to the
one isolating the other, and only the displayed partition has another
equal-mass block exchange.

Let $T$ be the set of unchanged vertices using root swaps.  If $T=\varnothing$,
all permutation-induced matchings fix $r_\star$, so the changed star at source state $r_\star$ equals
the base star at $r_\star$, violating correctness.  If all three unchanged vertices
use root swaps, each base full view has partition
$\{\{r_\star\},\{\ell_1,\ell_2\}\}$; every
incident edge is therefore constant or heavy-versus-light.  Swapping all three
roots exchanges the base heavy and light star tuples, leaving their support
unchanged.  Hence $1\le|T|\le2$.

Consider an opposite edge crossing the nontrivial cut $T\mid T^c$.  At the
root-swap endpoint, a nonconstant edge would exchange its heavy and light
labels; at the permutation-matched endpoint, source state $r_\star$ is fixed and its heavy label
is preserved.  These requirements are incompatible, so every crossing edge is
constant.  A nontrivial cut of a triangle has two crossing edges, proving (ii).

If at most one edge incident to $i$ were nonconstant, every base-constant
coordinate would remain the same point mass after the flip, and the sole
nonconstant coordinate would have the same marginal support before and after.
The two star supports would intersect, proving (i).
\end{proof}

Let $n_h$ be the number of constant helper spokes and $n_\triangle$ the number of constant
edges in the active triangle.  Summing parts (i) and (ii) over the three active flips gives
\begin{equation}
 n_h+2n_\triangle\le3,\qquad 2n_h+n_\triangle\ge6.
 \label{eq:o1-heavy-light-count}
\end{equation}
Thus $n_\triangle=0$ and $n_h=3$, contradicting
Lemma~\ref{lem:o1-helper-delete}.

Both possible three-state source-weight cases are impossible.  Therefore
$m_{K_4}(D_1)\ge4$, and Proposition~\ref{prop:k4-upper} gives equality.  This
proves the $\cO_1$ row of Theorem~\ref{thm:orbit-values} without a computer
lower bound.
\section{Exact Computation for the \texorpdfstring{$K_4$}{K4} Classification}\label{app:computation}

\subsection{Normalized local configurations}

For a weight class $\mathcal C$ and $t\ge1$, let $N_t(\mathcal C)$ denote the
number of normalized local configurations for $t$ inputs sharing one fixed
local bit, after the labels of the first input are normalized.  In the uniform
source-weight case, the one-block (coarse) partition contributes one family,
the three two-block partitions each allow three transporters per additional
input, and the fine partition allows six.  Hence
\begin{equation}
 N_t(\cU)=1+3\cdot3^{t-1}+6^{t-1}.
 \label{eq:NtU}
\end{equation}
For $\cQ$, the five canonical partitions contribute one or two transporters,
giving
\begin{equation}
 N_t(\cQ)=1+4\cdot2^{t-1}.
 \label{eq:NtQ}
\end{equation}
Set $N_0=1$.  If a vertex sees $t_0$ domain points with local bit zero and $t_1$
with local bit one, its normalized configuration count is
$N_{t_0}N_{t_1}$, which yields Table~\ref{tab:search}.

\emph{Exact decision test.}
The program realizes each forced closure $\equiv_e$ from
Appendix~\ref{app:three-state} by union--find and stores its separated
local-label pairs as a bitset.  It accepts exactly when the predicate stated
there holds; bitsets accelerate intersections but do not relax the predicate.

\emph{Soundness.}
Every accepted configuration satisfies the local mass equations by construction
and signature injectivity by direct checking.  Theorem~\ref{thm:three-normal}
constructs a correct and private scheme.

\emph{Completeness.}
Given any three-state scheme in $\cU$ or $\cQ$, normalize its local labels at the
first input of each local-bit class.  Privacy supplies one of the enumerated
mass-preserving transporters at every other input.  The original edge shares are
constant on the forced equivalence classes, and correctness implies signature
injectivity.  Hence the enumerator contains an accepting branch whenever a
scheme exists.

The checker \texttt{privacy/normal\_form\_search.cpp} independently regenerates local
configurations and forced edge closures from Theorem~\ref{thm:three-normal},
reproducing every $\cU/\cQ$ entry in Table~\ref{tab:search}.
It is not formally verified; correct compiler and hardware execution remain
assumptions.

\subsection{Cube-orbit and witness audit}

The Python audit \texttt{k4\_geometry.py} identifies the $896$ odd
affine tetrahedra among all $1820$ four-sets and closes their $\cG$-orbits,
of sizes $64,192,192,192,192,64$.  Its scan of all $65{,}535$ nonempty cube
domains finds $80$ containing an $\cO_6$ four-set and no $\cO_1$--$\cO_5$
four-set: the $64$ four-point domains in the orbit of $T_6$ and the $16$
five-point domains in the orbit of $C_5$.
This scan is a cross-check; the proof of Theorem~\ref{thm:full-k4} uses
Appendix~\ref{app:k4-signed}'s extension table.  The audit also verifies that
twelve-row table, the $C_5$ formula, and Proposition~\ref{prop:k4-upper}'s
four-state scheme.
Both programs, the driver \texttt{run\_exact\_search.py}, the logs, and
exact commands are available at
\url{https://github.com/pjmarszalik/price-of-privacy-code}.


\begin{thebibliography}{9}

\bibitem{BlundoMasucci1999}
C.~Blundo and B.~Masucci,
``Randomness in multi-secret sharing schemes,''
\emph{J. Universal Comput. Sci.}, vol.~5, no.~7, pp.~367--389, 1999,
doi: 10.3217/jucs-005-07-0367.

\bibitem{Masucci2006}
B.~Masucci,
``Sharing multiple secrets: Models, schemes and analysis,''
\emph{Des. Codes Cryptogr.}, vol.~39, pp.~89--111, 2006,
doi: 10.1007/s10623-005-2761-1.

\bibitem{Data2016}
D.~Data, V.~M. Prabhakaran, and M.~M. Prabhakaran,
``Communication and randomness lower bounds for secure computation,''
\emph{IEEE Trans. Inf. Theory}, vol.~62, no.~7, pp.~3901--3929, Jul.~2016,
doi: 10.1109/TIT.2016.2568207.

\bibitem{Anilkumar2023}
H.~K.~P. Anilkumar, A.~Rajesh, V.~Narayanan, M.~M. Prabhakaran, and
V.~M. Prabhakaran,
``Randomness requirements for three-secret sharing,'' in
\emph{Proc. 2023 IEEE Int. Symp. Inf. Theory (ISIT)}, 2023, pp.~252--257,
doi: 10.1109/ISIT54713.2023.10206455.

\bibitem{Anilkumar2024}
H.~K.~P. Anilkumar, V.~Narayanan, M.~Prabhakaran, and V.~M. Prabhakaran,
``Randomness in private sequential stateless protocols,'' in
\emph{Proc. ASIACRYPT 2024}, LNCS 15490, pp.~36--66, 2024,
doi: 10.1007/978-981-96-0941-3\_2.

\end{thebibliography}
\end{document}